\documentclass[11pt,a4paper]{article}

\usepackage[utf8]{inputenc}
\usepackage[T1]{fontenc}
\usepackage{lmodern}

\usepackage{amsmath,amssymb,amsthm}
\usepackage{bm}
\usepackage{mathtools}

\usepackage{booktabs}
\usepackage{array}
\usepackage{float}          %

\usepackage{tikz}
\usepackage{pgfplots}
\pgfplotsset{compat=1.18}
\usetikzlibrary{arrows.meta, patterns, calc, decorations.markings}
\usepgfplotslibrary{fillbetween}

\usepackage[margin=2.5cm]{geometry}
\usepackage{setspace}
\usepackage{parskip}

\usepackage[authoryear,round]{natbib}

\usepackage[colorlinks=true,linkcolor=blue,citecolor=blue,urlcolor=blue]{hyperref}
\usepackage{cleveref}

\newtheorem{proposition}{Proposition}
\newtheorem{corollary}{Corollary}

\theoremstyle{definition}
\newtheorem{definition}{Definition}
\theoremstyle{remark}
\newtheorem*{remark}{Remark}

\newcommand{\dd}{\mathrm{d}}
\newcommand{\E}{\mathbb{E}}

\newcommand{\tw}{\tau_w}

\title{Wealth Taxation as a Drift Modification:\\
A Fokker--Planck Approach to Tax Neutrality}
\author{Anders G Fr{\o}seth\thanks{Independent Researcher.
  E-mail: \href{mailto:indrefjorden@pm.me}{indrefjorden@pm.me}.}}
\date{\today}

\begin{document}
\maketitle

\begin{abstract}
We reformulate the neutral wealth tax framework of \citet{Froeseth2026N}
in the language of stochastic dynamics and statistical physics.
Individual wealth under geometric Brownian motion satisfies a Langevin
equation with multiplicative noise; the probability density of wealth
across a population then evolves according to a Fokker--Planck equation.
A proportional wealth tax at market value enters as a uniform reduction
of the drift coefficient, preserving the diffusion structure and all
relative probability currents.  This drift-shift symmetry is the
physical content of tax neutrality.  Each channel through which
neutrality breaks down in practice---book-value assessment, liquidity
frictions, forced dividend extraction, migration, and market
impact---corresponds to a specific violation of this symmetry: a
state-dependent, asset-dependent, or flow-dependent modification of the
Fokker--Planck equation.  The framework clarifies when wealth taxation is
a benign rescaling of the dynamics and when it introduces genuinely new
physics.
\end{abstract}

\section{Introduction}\label{sec:intro}

The wealth tax debate, both in the academic literature and in public
policy, typically proceeds in one of two registers.  Economists analyse
the tax through equilibrium models---CAPM, Modigliani--Miller,
consumption-based asset pricing---where the object of interest is the
effect on prices, returns, and portfolio weights.  Separately, a large
econophysics literature studies wealth distributions using tools from
statistical mechanics---Boltzmann--Gibbs ensembles, Fokker--Planck
equations, agent-based models---but rarely connects to the asset-pricing
framework that governs how taxes interact with financial markets.

This paper bridges the two.  We show that the neutrality results derived
in \citet{Froeseth2026N} and the distortion channels analysed in
\citet{Froeseth2026E} have a natural and precise formulation in the
language of stochastic dynamics.  The mapping is not metaphorical: the
same geometric Brownian motion that underlies the finance results
\emph{is} a Langevin equation, and the evolution of the wealth
distribution across investors \emph{is} governed by a Fokker--Planck
equation.  No new physics is introduced; we simply read the existing
mathematics in a different dialect.

The value of this exercise is threefold.  First, the Fokker--Planck
formulation makes the symmetry content of neutrality transparent: the
wealth tax is neutral if and only if it enters as a uniform,
state-independent shift of the drift coefficient.  Second, each
violation of neutrality maps to a specific, classifiable modification of
the Fokker--Planck equation---offering a taxonomy of distortions that is
both exhaustive and physically intuitive.  Third, the framework
naturally accommodates questions about wealth \emph{distributions} and
their evolution---relaxation times, steady states, and the interplay
between individual dynamics and aggregate outcomes---that lie outside
the scope of representative-agent asset pricing.

We proceed step by step.  \Cref{sec:cochrane} introduces the core
concepts of modern asset pricing---the stochastic discount factor,
no-arbitrage, and the price of risk---in a language accessible to
physicists.  \Cref{sec:langevin} establishes the core mapping:
individual wealth dynamics under geometric Brownian motion as a
Langevin equation with multiplicative noise.  \Cref{sec:fp} derives the
corresponding Fokker--Planck equation for the wealth distribution.
\Cref{sec:tax} introduces the proportional wealth tax and shows that it
enters as a pure drift shift.  \Cref{sec:neutrality} formalises the
neutrality result as a symmetry of the Fokker--Planck equation and
shows that this symmetry is robust to non-Gaussian returns and
stochastic volatility.
\Cref{sec:channels} maps each distortion channel to a specific
symmetry-breaking mechanism.  \Cref{sec:steady} discusses steady-state
distributions when income and consumption are included as source and
sink terms.  \Cref{sec:discussion} discusses extensions and open
questions.

\section{Asset pricing for physicists: Cochrane's framework}\label{sec:cochrane}

This section introduces the core ideas of modern asset pricing in a
language accessible to readers trained in physics or physical chemistry.
The presentation follows \citet{Cochrane2005}, who showed that the
entire theory reduces to a single equation and a single object---the
\emph{stochastic discount factor}.  Readers familiar with finance may
skip to \Cref{sec:langevin}.

\subsection{What is an asset?}

An \emph{asset} is anything that produces an uncertain future payoff.
A share of stock pays dividends and can be sold; a bond pays coupons and
returns its face value; a house provides rent (or imputed rent) and can
be resold.  The central question of asset pricing is: \emph{given the
uncertain future payoff, what is the correct price today?}

In physics terms, an asset is a random variable~$x_{t+1}$ representing
the total payoff (cash flow plus resale value) received at time
$t+1$.  The price~$p_t$ is the value assigned to this random variable
today.  The \emph{gross return} is
\begin{equation}\label{eq:return_def}
  R_{t+1} \equiv \frac{x_{t+1}}{p_t} \,,
\end{equation}
so that $R_{t+1} = 1.05$ means a $5\%$ gain.  The return is a
dimensionless ratio---the payoff per unit of price---and is the
financial analogue of a growth factor.

\subsection{The fundamental equation of asset pricing}

Cochrane's central result is that, under very weak assumptions (no
arbitrage, discussed below), there exists a random variable
$m_{t,t+1}$---the \emph{stochastic discount factor} (SDF)---such that
the price of \emph{every} asset satisfies
\begin{equation}\label{eq:sdf}
  \boxed{p_t = \E_t\bigl[m_{t,t+1} \; x_{t+1}\bigr] \,.}
\end{equation}
Here $\E_t[\cdot]$ denotes the conditional expectation given all
information available at time~$t$.  The equation says: the price today
equals the expected value of tomorrow's payoff, weighted by~$m$.

\begin{remark}[Analogy with statistical mechanics]
Equation~\eqref{eq:sdf} has the same mathematical structure as a
\emph{partition-function average}.  In statistical mechanics, the
expectation value of an observable~$A$ in a canonical ensemble is
\begin{equation}\label{eq:stat_mech_avg}
  \langle A \rangle = \sum_s A(s) \; \frac{e^{-\beta E(s)}}{Z} \,,
\end{equation}
where $\beta = 1/k_B T$ is the inverse temperature and
$Z = \sum_s e^{-\beta E(s)}$ is the partition function.  The Boltzmann
weight $e^{-\beta E(s)}/Z$ plays exactly the role of~$m$: it assigns a
positive weight to each state, and the observable's ``price'' (its
equilibrium expectation) is the weighted average over states.

The analogy is not perfect---$m$ is stochastic and need not take the
exponential Boltzmann form---but the structural parallel is deep:
both frameworks assign state-dependent weights to compute expectations.
In finance, ``bad states'' (recessions, crises) receive high weight
because investors value payoffs more when they are poor; in physics,
low-energy states receive high weight because they are thermodynamically
favoured.
\end{remark}

\subsection{The stochastic discount factor}

The SDF~$m_{t,t+1}$ encodes how the market values payoffs in different
states of the world.  Its key properties:

\textbf{Positivity.}  Under no-arbitrage (see below), $m > 0$ in all
states.  This is the financial analogue of the requirement that
Boltzmann weights are positive.

\textbf{High in bad states.}  When the economy is in a bad state
(low consumption, high unemployment), $m$ is large: the market assigns
high value to payoffs received precisely when they are most needed.
This is the origin of risk premia.

\textbf{Low in good states.}  Conversely, payoffs received when
everyone is already wealthy are worth less per unit.

In consumption-based models, $m$ has the explicit form
\begin{equation}\label{eq:m_consumption}
  m_{t,t+1} = \beta \left(\frac{C_{t+1}}{C_t}\right)^{-\gamma} ,
\end{equation}
where $\beta < 1$ is a time-preference (patience) parameter,
$C_t$ is aggregate consumption, and $\gamma > 0$ is relative risk
aversion.  When consumption falls ($C_{t+1} < C_t$), the ratio
$(C_{t+1}/C_t)^{-\gamma}$ is large, so $m$ is large: payoffs in
recessions are highly valued.  But Cochrane's key insight is that
\eqref{eq:sdf} holds \emph{regardless} of whether we assume
\eqref{eq:m_consumption} or any other specific model.  The SDF exists
as long as there is no arbitrage.

\subsection{No arbitrage: the financial conservation law}

\emph{No arbitrage} means that there is no trading strategy that
produces a positive expected payoff with zero cost and zero risk.  In
physics language: \emph{there is no perpetual motion machine in
financial markets.}

\begin{proposition}[Fundamental theorem of asset pricing]
The following are equivalent:
\begin{enumerate}
  \item There are no arbitrage opportunities.
  \item There exists a strictly positive stochastic discount factor
    $m > 0$ such that $p = \E[m \, x]$ for all traded assets.
\end{enumerate}
\end{proposition}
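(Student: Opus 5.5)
I will prove the statement in the standard finite setting: a single period, finitely many states $s=1,\dots,S$ with physical probabilities $\pi_s>0$, and $N$ traded assets. The payoffs form a matrix $X\in\mathbb{R}^{N\times S}$ and the prices a vector $p\in\mathbb{R}^N$. A portfolio $\theta\in\mathbb{R}^N$ costs $\theta^\top p$ and pays $\theta^\top X$. An \emph{arbitrage} is a $\theta$ with either (a) $\theta^\top X\ge 0$ in every state and $\theta^\top p\le 0$, with at least one of these inequalities strict, or (b) a strictly positive payoff at zero or negative cost. This makes precise the informal phrase ``positive payoff with zero cost and zero risk''. Writing $m_s$ for the SDF in state $s$, the pricing relation $p=\E[mx]$ reads $p=X q$ with $q_s=\pi_s m_s$, the state prices.

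\textbf{(2)$\Rightarrow$(1).} This direction is a direct computation. Suppose $m>0$ and $p=\E[mx]$ for all traded payoffs. Then linearity gives $\theta^\top p=\E[m\,\theta^\top x]$. If $\theta^\top x\ge 0$ in every state and is strictly positive in some state, then, because $m>0$ and $\pi_s>0$, the cost $\theta^\top p$ is strictly positive. If $\theta^\top x\equiv 0$, the cost is zero. In neither case is there an arbitrage.

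\textbf{(1)$\Rightarrow$(2).} This direction requires a separating-hyperplane argument. I consider the subspace $M=\{(-\theta^\top p,\;\theta^\top X):\theta\in\mathbb{R}^N\}\subset\mathbb{R}^{1+S}$, and the cone $K=\mathbb{R}^{1+S}_{+}\setminus\{0\}$. No arbitrage says exactly that $M\cap K=\emptyset$. I then invoke a strict version of the separating hyperplane theorem, namely Stiemke's lemma or a theorem of the alternative in the style of Farkas. Separating $M$ from the compact simplex $\Delta=\{v\in\mathbb{R}^{1+S}_+:\sum v_i=1\}\subset K$ yields a vector $\lambda$ such that $\lambda\cdot v>0$ for all $v\in\Delta$ and $\lambda\cdot u=0$ on the subspace $M$. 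The first condition forces $\lambda_i>0$ for every coordinate.

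From $\lambda\cdot u=0$ on $M$, I get $-\lambda_0\,\theta^\top p+\sum_s\lambda_s\,\theta^\top X_{\cdot s}=0$ for all $\theta$. Hence $p=\sum_s (\lambda_s/\lambda_0)\,X_{\cdot s}$. Setting $m_s=\lambda_s/(\lambda_0\pi_s)>0$ then gives $p=\E[mx]$, which is statement (2).

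\textbf{Main obstacle.} The hard step is the separation itself. Because $K$ is not closed, I separate from the compact set $\Delta$ instead, which is what produces the strict inequalities and hence the strict positivity of $m$. Ordinary weak separation would only give $m\ge 0$.

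For infinite state spaces or continuous time, the same outline goes through only with more care. There one needs the Kreps--Yan theorem and the condition of ``no free lunch with vanishing risk'' (Delbaen--Schachermayer), in place of the finite-dimensional separation. I would state the proposition in the finite setting and cite those results for the general case, in line with the expository tone of this section.
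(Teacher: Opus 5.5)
The paper does not prove this proposition. It quotes it, in the expository section on Cochrane's framework, as the standard fundamental theorem, so there is no argument of the paper's to set yours against. Your finite-state proof is correct and is the standard Stiemke-type argument. The steps you leave implicit all hold:
\begin{itemize}
  \item a linear functional bounded above on the subspace $M$ must vanish on it;
  \item $0\in M$ makes the separating constant strictly positive;
  \item evaluating at the vertices $e_i$ of $\Delta$ gives every $\lambda_i>0$, hence $m_s=\lambda_s/(\lambda_0\pi_s)>0$.
\end{itemize}
You are also right that separating from the compact $\Delta$ rather than the non-closed $K$ is exactly what yields strict rather than weak positivity.

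In two respects your formulation is sharper than the paper's informal one, and both matter.

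First, your definition of arbitrage includes portfolios with zero payoff and strictly negative cost, and the equivalence needs this. Under the paper's literal wording (``positive expected payoff with zero cost and zero risk''), consider a market with a single asset paying $0$ in every state at price $1$. It has no arbitrage, yet no $m$ satisfies $1=\E[m\cdot 0]$. Encoding cost and payoff jointly in $M$ absorbs the law of one price into no-arbitrage. Cochrane's textbook treatment instead takes the law of one price as a separate assumption, so that price is a linear functional on the payoff space. It defines no-arbitrage on that space and then extends the functional positively.

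Second, fixing a concrete finite setting is the right call rather than a mere convenience. In general continuous-time models, which is where the paper's GBM and Heston dynamics live, plain no-arbitrage does not imply a strictly positive SDF, and condition~(1) must be strengthened to NFLVR, as you note. Your closing remark slightly overstates the difficulty for infinite state spaces, though. With finitely many assets in discrete time, Dalang--Morton--Willinger shows that plain no-arbitrage still suffices on a general probability space. The extra work there lies in measurability and closedness issues that are automatic in your finite case.
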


This is the most important theorem in finance.  It says that the
absence of free lunches is \emph{equivalent} to the existence of a
positive weighting function~$m$.  The parallel to physics is:
\begin{center}
\renewcommand{\arraystretch}{1.4}
\begin{tabular}{@{}ll@{}}
\toprule
\textbf{Physics} & \textbf{Finance} \\
\midrule
No perpetual motion (2nd law) & No arbitrage \\
$\Rightarrow$ Entropy is a state function & $\Rightarrow$ SDF exists ($m > 0$) \\
$\Rightarrow$ Boltzmann weights well-defined & $\Rightarrow$ Prices are expectations under $m$ \\
\bottomrule
\end{tabular}
\end{center}
Just as the second law of thermodynamics is not derived from
microscopic dynamics but constrains all possible dynamics, no-arbitrage
is not derived from individual behaviour but constrains all possible
prices.

\subsection{Risk, return, and the price of risk}

Dividing \eqref{eq:sdf} by $p_t$ and using $R_{t+1} = x_{t+1}/p_t$
gives the \emph{return form}:
\begin{equation}\label{eq:sdf_return}
  1 = \E_t[m_{t,t+1} \, R_{t+1}] \,.
\end{equation}
Expanding using the covariance identity
$\E[XY] = \E[X]\E[Y] + \mathrm{Cov}(X,Y)$:
\begin{equation}\label{eq:risk_premium}
  \E_t[R_{t+1}] = \frac{1}{\E_t[m_{t,t+1}]}
    - \frac{\mathrm{Cov}_t(m_{t,t+1}, R_{t+1})}{\E_t[m_{t,t+1}]} \,.
\end{equation}
The first term is the \emph{risk-free rate}: the return on an asset
whose payoff does not fluctuate.  Since a risk-free asset has
$\mathrm{Cov}(m, R_f) = 0$:
\begin{equation}\label{eq:rf}
  R_f = \frac{1}{\E[m]} \,.
\end{equation}
The second term is the \emph{risk premium}: assets that pay off in bad
states (when $m$ is high) have $\mathrm{Cov}(m, R) > 0$, hence
\emph{lower} expected returns---they provide insurance.  Assets that
pay off in good states (when $m$ is low) have
$\mathrm{Cov}(m, R) < 0$, hence \emph{higher} expected returns---they
carry risk.

\begin{remark}[Risk is not variance]
A central insight that physicists often find surprising: in finance,
risk is \emph{not} measured by the variance of an asset's return.  An
asset can be highly volatile yet command no risk premium if its
fluctuations are uncorrelated with the SDF (i.e., uncorrelated with the
aggregate state of the economy).  Risk is entirely about
\emph{covariance with the pricing kernel}---not individual variance.
This is analogous to the fact that, in statistical mechanics, the
contribution of a mode to the free energy depends on its coupling to
the thermal bath, not on its amplitude alone.
\end{remark}

\subsection{The risk-free rate as temperature}

The risk-free rate $R_f = 1/\E[m]$ sets the baseline price of time:
how much the market discounts future payoffs purely for waiting, with
no risk.  It plays a role analogous to temperature in statistical
mechanics:
\begin{itemize}
  \item \textbf{Low $R_f$} (low interest rates) $\Leftrightarrow$
    $\E[m]$ is high $\Leftrightarrow$ future payoffs are valuable
    $\Leftrightarrow$ asset prices are high.  This is a ``hot'' market
    in physics language: high prices, low discount rates, willingness to
    pay for future outcomes.
  \item \textbf{High $R_f$} (high interest rates) $\Leftrightarrow$
    $\E[m]$ is low $\Leftrightarrow$ future payoffs are less valuable
    $\Leftrightarrow$ asset prices are low.  A ``cold'' market:
    investors demand compensation for waiting.
\end{itemize}
The analogy is suggestive but should be used with care: unlike
thermodynamic temperature, $R_f$ is not a state variable of the
system---it is an endogenous price determined by aggregate patience
and growth expectations.

\subsection{Specific models as specifications of $m$}

Different finance models are simply different choices for the form
of~$m$:

\begin{center}
\renewcommand{\arraystretch}{1.4}
\begin{tabular}{@{}lll@{}}
\toprule
\textbf{Model} & \textbf{SDF specification} & \textbf{Physics analogue} \\
\midrule
CAPM & $m = a - b \, R_{\text{market}}$
  & Single-mode coupling \\
Consumption CAPM & $m = \beta(C_{t+1}/C_t)^{-\gamma}$
  & Boltzmann weight \\
Fama--French 3-factor & $m = a - b_1 f_1 - b_2 f_2 - b_3 f_3$
  & Three-mode coupling \\
Black--Scholes & $m = e^{-r_f \Delta t}$ (risk-neutral)
  & Constant weight \\
\bottomrule
\end{tabular}
\end{center}

The power of Cochrane's approach is that \eqref{eq:sdf} holds
\emph{before} choosing a specific model.  This is analogous to
deriving results from thermodynamic identities (which hold for any
system) rather than from a specific Hamiltonian.

\subsection{Cochrane's framework and the wealth tax}

With this machinery, we can state the wealth tax neutrality result in
Cochrane's language.  A proportional wealth tax at rate~$\tw$ reduces
every asset's payoff by the same factor: $x_{t+1} \to (1-\tw)\, x_{t+1}$.
Since~$m$ is unchanged (the tax does not alter the state of the
economy or the marginal utility of consumption), the price becomes
\begin{equation}\label{eq:sdf_tax}
  p_t^{\text{after tax}} = \E_t\bigl[m_{t,t+1}\;(1-\tw)\,x_{t+1}\bigr]
  = (1-\tw)\,\E_t[m_{t,t+1}\;x_{t+1}]
  = (1-\tw)\,p_t \,.
\end{equation}
Every price falls by the same factor $(1-\tw)$.  The \emph{return}
$R_{t+1} = x_{t+1}/p_t$ is unchanged because both numerator and
denominator scale by $(1-\tw)$.  Expected returns, risk premia,
Sharpe ratios, and optimal portfolio weights are all invariant.

This is the pricing side of the neutrality result.  The remainder of
the paper shows that the same result has a natural expression in
Fokker--Planck language: the tax is a uniform drift shift that
preserves the structure of the stochastic dynamics.

\begin{remark}[When does the SDF change?]
The neutrality result assumes that $m$ is unaffected by the tax.
This holds in partial equilibrium (the tax does not change aggregate
consumption or the marginal utility structure) and in the general
equilibrium of a homogeneous economy where all investors are taxed
identically.  If the tax is non-uniform---applying only to some
investors or some assets---then $m$ may change, and neutrality breaks
down.  This connects to the distortion channels of
\Cref{sec:channels}.
\end{remark}

\subsection{Correspondence of concepts}\label{sec:rosetta}

\Cref{tab:rosetta} collects the key conceptual correspondences between
the finance and statistical physics frameworks developed in this paper.
The table is intended as a reference; each entry is introduced and
justified in the sections that follow.

\begin{table}[H]
\centering
\caption{Correspondence between finance and statistical physics
  concepts.  The left column lists standard finance terminology
  (following \citealt{Cochrane2005}); the right column gives the
  statistical physics equivalent used in this paper.  Entries above the
  mid-rule are conceptual; entries below are mathematical objects.}
\label{tab:rosetta}
\renewcommand{\arraystretch}{1.45}
\small
\begin{tabular}{@{}p{0.42\textwidth}p{0.42\textwidth}@{}}
\toprule
\textbf{Finance} & \textbf{Statistical physics} \\
\midrule
\multicolumn{2}{@{}l}{\textit{Entities and structure}} \\[3pt]
Investor & Particle (trajectory in wealth space) \\
Population of investors & Ensemble \\
Asset (stock, bond, \ldots) & Degree of freedom \\
Portfolio of $N$ assets & System of $N$ coupled degrees of freedom \\
Market & Open system (with sources and sinks) \\[6pt]
\multicolumn{2}{@{}l}{\textit{Prices and constraints}} \\[3pt]
No arbitrage & No perpetual motion (2nd law) \\
Stochastic discount factor $m$ & Boltzmann weight / measure on states \\
Asset price $p = \E[m\,x]$ & Partition-function average $\langle A \rangle$ \\
Risk-free rate $R_f = 1/\E[m]$ & Inverse temperature $\beta = 1/k_BT$
  \emph{(suggestive, not exact)} \\[6pt]
\multicolumn{2}{@{}l}{\textit{Dynamics}} \\[3pt]
Expected return $\mu$ & Drift velocity $v = \mu - \sigma^2/2$ \\
Volatility $\sigma$ & Noise strength (diffusion coeff.${}^{1/2}$) \\
GBM $\dd W/W = \mu\,\dd t + \sigma\,\dd B_t$
  & Langevin equation (multiplicative noise) \\
Log-return process
  & Langevin equation (additive noise) \\
Diffusion coefficient & $D = \sigma^2/2$ (Einstein relation) \\
Covariance matrix $\bm{\Sigma}$ & Coupling matrix between modes \\
Eigenvectors of $\bm{\Sigma}$ & Normal modes of the system \\[6pt]
\multicolumn{2}{@{}l}{\textit{Distributions and equilibrium}} \\[3pt]
Wealth distribution $p(W,t)$ & Probability density (FP equation) \\
Probability current $J$ & Net flux in wealth space \\
Efficient market (risk-neutral drift)
  & Detailed balance under pricing measure \\
Wealth inequality (Pareto tail)
  & Power-law tail of steady-state distribution \\
Gini coefficient, Pareto exponent
  & Moments / exponents of $\pi_{\mathrm{ss}}(x)$ \\[6pt]
\multicolumn{2}{@{}l}{\textit{Taxation}} \\[3pt]
Proportional wealth tax $\tw$
  & Uniform drift shift: $v \to v - \tw$ \\
Tax neutrality & Drift-shift symmetry (uniform external field) \\
Book-value assessment & Anisotropic field (asset-dependent coupling) \\
Liquidity friction & State-dependent friction ($v, D$ depend on $W$) \\
Forced dividend extraction & Coupling to slow variable (firm capital $K$) \\
Migration (tax-induced emigration) & Absorbing boundary / sink term \\
Market impact of forced sales & Mean-field interaction (nonlinear drift) \\
\bottomrule
\end{tabular}
\end{table}

\section{Individual wealth as a Langevin equation}\label{sec:langevin}

\subsection{Geometric Brownian motion in financial language}

Consider an investor whose entire wealth $W(t)$ is invested in a single
risky asset.  Under the standard assumption of geometric Brownian motion
(GBM), the wealth evolves as
\begin{equation}\label{eq:gbm}
  \frac{\dd W}{W} = \mu \, \dd t + \sigma \, \dd B_t \,,
\end{equation}
where $\mu$ is the expected instantaneous return, $\sigma > 0$ is the
volatility, and $B_t$ is a standard Brownian motion.  This is the
starting point of the Black--Scholes--Merton framework, the Capital
Asset Pricing Model, and the neutrality analysis in
\citet{Froeseth2026N}.

The key property of \eqref{eq:gbm} is that the noise is
\emph{multiplicative}: the magnitude of the random shock $\sigma W \,
\dd B_t$ is proportional to the current wealth level.  Rich investors
experience larger absolute fluctuations than poor investors, even though
the percentage fluctuation~$\sigma$ is the same.

\subsection{Log-wealth and the Langevin equation}

Define $x(t) \equiv \ln W(t)$.  By It\^{o}'s lemma,
\begin{equation}\label{eq:langevin}
  \dd x = \underbrace{\left(\mu - \tfrac{\sigma^2}{2}\right)}_{\displaystyle
    \equiv\, v} \dd t \;+\; \sigma \, \dd B_t \,.
\end{equation}
This is a \emph{Langevin equation with additive noise} in the
log-wealth coordinate~$x$.  The drift velocity $v = \mu - \sigma^2/2$
is constant, and the noise strength~$\sigma$ is independent of the
state.  In the physics of Brownian motion, \eqref{eq:langevin}
describes a particle drifting at constant velocity~$v$ in a viscous
medium subject to thermal fluctuations of strength~$\sigma$
\citep[Ch.~2]{Zwanzig2001}.

\begin{remark}[No restoring force]
There is no spring constant, no mean-reversion, and no equilibrium
position.  The particle drifts indefinitely.  This distinguishes the
mapping from a harmonic oscillator, and the distinction matters: a
harmonic oscillator has a stationary Boltzmann distribution, while a
freely drifting Brownian particle does not.  For the wealth distribution
to have a steady state, additional ingredients (income, consumption,
death) are required; see \Cref{sec:steady}.
\end{remark}

\subsection{Summary of the mapping}

The key identifications---wealth as $e^x$, expected return as drift
velocity, volatility as noise strength, $D = \sigma^2/2$ as the Einstein
relation---are collected in the ``Dynamics'' rows of
\Cref{tab:rosetta}.  No approximation is involved: the two columns are
the same mathematical object read in different notation.

\subsection{Multiple assets and the portfolio}

When the investor holds $N$ assets with return vector
$\bm{\mu} = (\mu_1, \ldots, \mu_N)^\top$, volatilities
$\bm{\sigma} = (\sigma_1, \ldots, \sigma_N)^\top$, and correlation
matrix $\bm{\rho}$, the portfolio return is
\begin{equation}\label{eq:portfolio_return}
  \frac{\dd W}{W} = \mathbf{w}^\top \bm{\mu} \, \dd t
    + \mathbf{w}^\top \bm{\Sigma}^{1/2} \, \dd \mathbf{B}_t \,,
\end{equation}
where $\mathbf{w}$ is the vector of portfolio weights and
$\bm{\Sigma} = \mathrm{diag}(\bm{\sigma})\, \bm{\rho} \,
\mathrm{diag}(\bm{\sigma})$ is the covariance matrix.  The
log-wealth again follows a Langevin equation:
\begin{equation}\label{eq:langevin_port}
  \dd x = \left(\mathbf{w}^\top \bm{\mu}
    - \tfrac{1}{2}\mathbf{w}^\top \bm{\Sigma} \, \mathbf{w}\right)
    \dd t
    \;+\; \sqrt{\mathbf{w}^\top \bm{\Sigma}\, \mathbf{w}} \;\dd B_t \,,
\end{equation}
where the scalar Brownian motion~$B_t$ drives the portfolio return
and the effective diffusion coefficient is
$D_{\mathrm{port}} = \tfrac{1}{2}\mathbf{w}^\top \bm{\Sigma}\,
\mathbf{w}$.

\begin{remark}[Correlation structure]
The covariance matrix~$\bm{\Sigma}$ plays the role of a coupling
matrix in the multi-dimensional Langevin system.  Its eigenvectors
define the independent ``modes'' of the system: the largest eigenvalue
corresponds to the aggregate market factor; smaller eigenvalues
correspond to sector or idiosyncratic modes.  This is mathematically
identical to the normal-mode decomposition of coupled linear systems in
physics, though the dynamics here are stochastic rather than
deterministic.  Random matrix theory (the Mar\v{c}enko--Pastur law)
provides the tools to separate signal from noise in the empirical
covariance matrix---a problem with no counterpart in deterministic
coupled oscillators.
\end{remark}

\section{The Fokker--Planck equation for wealth}\label{sec:fp}

\subsection{From individual trajectories to distributions}

\begin{remark}[Terminology]
Throughout this paper, we use \emph{investor} for the individual
decision-maker in the financial market, consistent with the asset
pricing literature.  In the statistical physics mapping, each
investor's wealth trajectory corresponds to a \emph{particle}
trajectory in the Langevin/Fokker--Planck framework, and the population
of investors corresponds to the \emph{ensemble}.  We reserve the term
\emph{agent} for the distinct context of agent-based computational
models, which are not our primary framework here.
\end{remark}

Suppose there are $\mathcal{N}$ investors, each with wealth $W_i(t)$
following \eqref{eq:gbm} with common parameters $(\mu, \sigma)$.
Here $W$ denotes \emph{market net wealth}: the market value of all
assets minus the face value of all liabilities.  When the wealth tax
is levied at market value (the neutral case), the tax base equals~$W$
and the analysis is clean.  When assets are assessed below market
value while liabilities remain fully deductible, the tax base
diverges from~$W$---a distinction that becomes important in the
distortion analysis of \Cref{sec:ch1}.

Let $p(W, t)$ denote the probability density: $p(W,t)\, \dd W$ is the
fraction of investors with wealth in $[W, W + \dd W]$ at time~$t$.

The standard result from stochastic calculus (the forward Kolmogorov
equation) gives the Fokker--Planck equation corresponding to the
stochastic differential equation~\eqref{eq:gbm}:
\begin{equation}\label{eq:fp_W}
  \boxed{
  \frac{\partial p}{\partial t}
  = -\frac{\partial}{\partial W}\bigl[\mu W \, p\bigr]
    + \frac{1}{2}\frac{\partial^2}{\partial W^2}
      \bigl[\sigma^2 W^2 \, p\bigr] \,.}
\end{equation}
The first term is the \emph{drift} (or advection): wealth is pushed
upward at rate~$\mu W$.  The second term is the
\emph{diffusion}: the distribution spreads due to the stochastic
fluctuations of magnitude~$\sigma W$.

\subsection{Fokker--Planck in log-wealth coordinates}

Because the noise in \eqref{eq:gbm} is multiplicative, the
coefficients in \eqref{eq:fp_W} depend on~$W$.  The coordinate change
$x = \ln W$ removes this state-dependence.  Let $\pi(x,t)$ denote the
density of log-wealth, related to $p(W,t)$ by
\begin{equation}
  \pi(x,t) = W \, p(W,t) = e^x \, p(e^x, t) \,.
\end{equation}
Substituting into \eqref{eq:fp_W}, we obtain the Fokker--Planck
equation for $\pi(x,t)$:
\begin{equation}\label{eq:fp_x}
  \boxed{
  \frac{\partial \pi}{\partial t}
  = -v \frac{\partial \pi}{\partial x}
    + D \frac{\partial^2 \pi}{\partial x^2} \,,}
\end{equation}
where $v = \mu - \sigma^2/2$ is the drift velocity and
$D = \sigma^2/2$ is the diffusion coefficient.  This is the standard
\emph{drift--diffusion equation} with constant coefficients---the
simplest non-trivial Fokker--Planck equation, and one of the most
studied objects in statistical physics
\citep[Ch.~3]{Zwanzig2001}\citep[Ch.~6]{LiviPoliti2017}.

\begin{remark}[Propagator]
The Green's function of \eqref{eq:fp_x} is the Gaussian
\begin{equation}\label{eq:propagator}
  \pi(x, t \,|\, x_0, 0) = \frac{1}{\sqrt{4\pi D t}}
    \exp\!\left(-\frac{(x - x_0 - vt)^2}{4Dt}\right) ,
\end{equation}
which is simply the statement that log-wealth is normally distributed
with mean $x_0 + vt$ and variance $2Dt = \sigma^2 t$.  This is the
standard result that $W(t)$ is log-normally distributed under GBM.
(The Gaussian form of the propagator is specific to the
constant-coefficient case; the drift-shift symmetry that underlies
our neutrality result does not depend on it---see
\Cref{sec:robustness}.)  The
Fokker--Planck formulation adds nothing new for a single investor, but
becomes essential when we consider distributions across investors and
their evolution under policy changes.
\end{remark}

\subsection{Probability current}

The Fokker--Planck equation can be written as a continuity equation:
\begin{equation}\label{eq:continuity}
  \frac{\partial \pi}{\partial t} + \frac{\partial J}{\partial x} = 0 \,,
\end{equation}
where the \emph{probability current} is
\begin{equation}\label{eq:current}
  J(x, t) = v\, \pi(x,t) - D \frac{\partial \pi}{\partial x} \,.
\end{equation}
The current~$J$ has a direct interpretation: it is the net flux of
investors (in probability terms) flowing past the point~$x$ per unit
time.  The drift term $v\pi$ pushes probability to the right (higher
wealth); the diffusion term $-D\, \partial\pi/\partial x$ drives
probability from high-density to low-density regions
\citep[see][Ch.~6, for boundary conditions and current analysis]{LiviPoliti2017}.

The concept of probability current will be central to the neutrality
analysis: a tax is neutral if it does not alter the \emph{relative}
currents between any two points in wealth space.

\section{Proportional wealth tax as drift modification}\label{sec:tax}

\subsection{The taxed dynamics}

Now impose a proportional wealth tax at rate $\tw$ on the market value
of all holdings.  Following \citet{Froeseth2026N}, the after-tax
wealth dynamics are
\begin{equation}\label{eq:gbm_tax}
  \frac{\dd W}{W} = (\mu - \tw) \, \dd t + \sigma \, \dd B_t \,.
\end{equation}
The tax reduces the expected return from $\mu$ to $\mu - \tw$, but
leaves the volatility~$\sigma$ unchanged.  This is the multiplicative
separability result: the tax operates as a proportional dilution,
removing a fraction~$\tw$ of wealth per unit time, without altering
the stochastic structure of returns.

In log-wealth:
\begin{equation}\label{eq:langevin_tax}
  \dd x = \underbrace{\left(\mu - \tw - \tfrac{\sigma^2}{2}\right)}_{
    \displaystyle \equiv\, v_\tau} \dd t
    \;+\; \sigma \, \dd B_t \,.
\end{equation}
The only change is the drift velocity: $v \to v_\tau = v - \tw$.

\subsection{The taxed Fokker--Planck equation}

The Fokker--Planck equation for the taxed system is
\begin{equation}\label{eq:fp_tax}
  \boxed{
  \frac{\partial \pi}{\partial t}
  = -v_\tau \frac{\partial \pi}{\partial x}
    + D \frac{\partial^2 \pi}{\partial x^2} \,,}
\end{equation}
with $v_\tau = v - \tw$ and $D = \sigma^2/2$ unchanged.  The taxed
probability current is
\begin{equation}\label{eq:current_tax}
  J_\tau(x, t) = v_\tau\, \pi(x,t)
    - D \frac{\partial \pi}{\partial x} \,.
\end{equation}

\subsection{What the tax does and does not change}

\begin{center}
\renewcommand{\arraystretch}{1.4}
\begin{tabular}{@{}lcc@{}}
\toprule
\textbf{Quantity} & \textbf{Untaxed} & \textbf{Taxed} \\
\midrule
Drift velocity & $v = \mu - \sigma^2/2$ & $v_\tau = \mu - \tw - \sigma^2/2$ \\
Diffusion coefficient & $D = \sigma^2/2$ & $D = \sigma^2/2$ \\
Noise strength & $\sigma$ & $\sigma$ \\
Propagator width & $\sqrt{2Dt}$ & $\sqrt{2Dt}$ \\
Propagator centre & $x_0 + vt$ & $x_0 + v_\tau t$ \\
\bottomrule
\end{tabular}
\end{center}
\smallskip

The tax translates the centre of the propagator to the left (lower
average wealth) but does not change its width (same dispersion around
the mean).  In physical terms: the particle drifts more slowly, but
diffuses at the same rate.  \Cref{fig:propagator} illustrates this
for a population of investors evolving over twenty years.

\begin{figure}[H]
\centering
\begin{tikzpicture}
\begin{axis}[
  width=0.85\textwidth,
  height=0.48\textwidth,
  xlabel={Log-wealth $x = \ln W$},
  ylabel={Probability density $\pi(x,t)$},
  domain=-2:8,
  samples=200,
  ymin=0,
  ymax=0.58,
  xmin=-2,
  xmax=8,
  legend style={
    at={(0.03,0.97)},
    anchor=north west,
    font=\small,
    draw=black!50,
    fill=white,
    fill opacity=0.9,
  },
  every axis plot/.append style={thick},
  axis lines=left,
  clip=true,
  tick label style={font=\small},
  label style={font=\small},
]

\addplot[black, dashed, thin, name path=initial]
  {1/(sqrt(2*3.14159*0.35))*exp(-(x-3)^2/(2*0.35))};
\addlegendentry{Initial ($t=0$)}

\addplot[blue!80!black, solid, name path=untaxed]
  {1/(sqrt(2*3.14159*0.8))*exp(-(x-4.6)^2/(2*0.8))};
\addlegendentry{Untaxed ($t=20$)}

\addplot[red!80!black, solid, name path=taxed]
  {1/(sqrt(2*3.14159*0.8))*exp(-(x-4.0)^2/(2*0.8))};
\addlegendentry{Taxed ($t=20$, $\tau_w = 3\%$)}

\draw[blue!80!black, densely dotted, thin]
  (axis cs:4.6,0) -- (axis cs:4.6,0.455);
\draw[red!80!black, densely dotted, thin]
  (axis cs:4.0,0) -- (axis cs:4.0,0.455);

\draw[-{Stealth[length=3mm]}, thick, black!60]
  (axis cs:4.6,0.50) -- (axis cs:4.0,0.50)
  node[midway, above, font=\small] {$\tau_w t$};

\draw[blue!80!black, densely dotted, very thin, opacity=0.5]
  (axis cs:3.706,0) -- (axis cs:3.706,0.275);
\draw[blue!80!black, densely dotted, very thin, opacity=0.5]
  (axis cs:5.494,0) -- (axis cs:5.494,0.275);
\draw[red!80!black, densely dotted, very thin, opacity=0.5]
  (axis cs:3.106,0) -- (axis cs:3.106,0.275);
\draw[red!80!black, densely dotted, very thin, opacity=0.5]
  (axis cs:4.894,0) -- (axis cs:4.894,0.275);

\draw[{Stealth[length=2mm]}-{Stealth[length=2mm]}, thick, blue!60!black, opacity=0.6]
  (axis cs:3.706,0.015) -- (axis cs:5.494,0.015);
\node[font=\footnotesize, blue!60!black, opacity=0.8, fill=white, inner sep=1pt]
  at (axis cs:4.6,0.015) {$\sqrt{2Dt}$};

\draw[{Stealth[length=2mm]}-{Stealth[length=2mm]}, thick, red!60!black, opacity=0.6]
  (axis cs:3.106,0.04) -- (axis cs:4.894,0.04);
\node[font=\footnotesize, red!60!black, opacity=0.8, fill=white, inner sep=1pt]
  at (axis cs:4.0,0.04) {$\sqrt{2Dt}$};

\end{axis}
\end{tikzpicture}
\caption{The separability result visualised.  A population of
  investors starts at log-wealth $x_0 = 3$ (dashed).  After $t = 20$
  years with $\mu = 10\%$ and $\sigma = 20\%$, the untaxed
  distribution (blue) and the distribution under a $\tau_w = 3\%$
  proportional wealth tax (red) have identical shape and width
  ($\sqrt{2Dt} = 0.89$) but different centres.  The tax shifts the
  propagator to the left by $\tau_w t = 0.6$ in log-wealth without
  deforming it.  This is the visual content of neutrality: the tax is
  a pure translation in log-wealth space.}
\label{fig:propagator}
\end{figure}
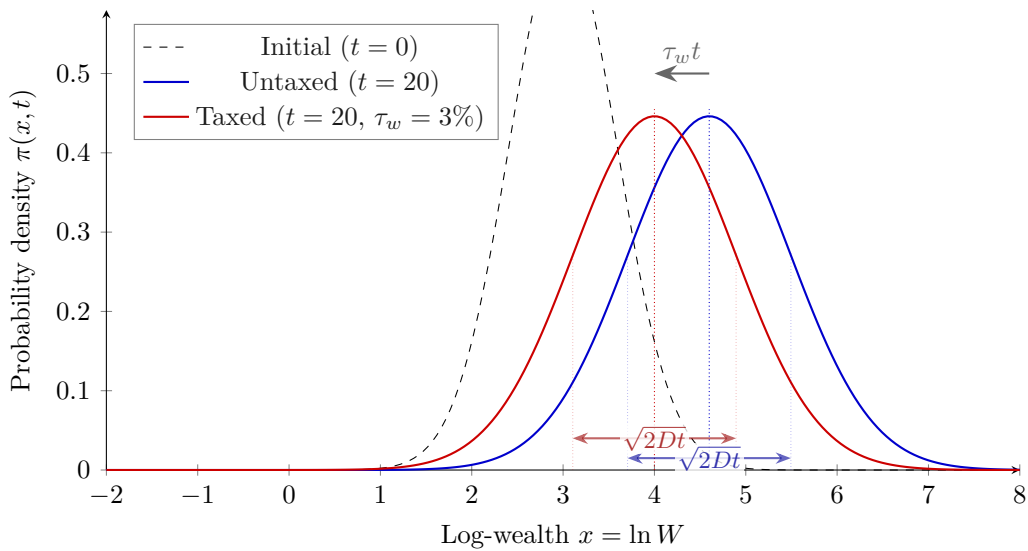

\section{Neutrality as a symmetry}\label{sec:neutrality}

\subsection{The drift-shift symmetry}

The central observation is that the proportional wealth tax acts as a
\emph{Galilean-type boost} in log-wealth space: it shifts the drift
velocity uniformly, without coupling to the state, the diffusion, or
any other parameter.

\begin{definition}[Drift-shift transformation]
  For $\tw \geq 0$, define the map
  $\mathcal{T}_\tau: v \mapsto v - \tw$, $D \mapsto D$.
  The taxed Fokker--Planck operator is
  $\mathcal{L}_\tau = \mathcal{T}_\tau \circ \mathcal{L}_0$.
\end{definition}

\begin{proposition}[Neutrality as invariance]\label{prop:neutrality}
Let two assets have drift--diffusion parameters
$(v_1, D_1)$ and $(v_2, D_2)$ under the untaxed Fokker--Planck
equation.  Under the proportional wealth tax at rate~$\tw$, the
parameters become $(v_1 - \tw, D_1)$ and $(v_2 - \tw, D_2)$.  Then:
\begin{enumerate}
  \item The difference in drift velocities is unchanged:
    $(v_1 - \tw) - (v_2 - \tw) = v_1 - v_2$.
  \item The ratio of diffusion coefficients is unchanged:
    $D_1 / D_2$.
  \item The Sharpe-ratio-like quantity
    $(v_i - v_j) / \sqrt{D_i + D_j - 2D_{ij}}$
    is unchanged for all pairs $(i,j)$.
  \item The optimal portfolio weights, which depend only on
    $\bm{\mu} - r_f \mathbf{1}$ and $\bm{\Sigma}$, are unchanged
    because $(\bm{\mu} - \tw\mathbf{1}) - (r_f - \tw)\mathbf{1}
    = \bm{\mu} - r_f\mathbf{1}$.
\end{enumerate}
\end{proposition}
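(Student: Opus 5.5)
The plan is to verify the four items directly from the definition of the drift-shift map $\mathcal{T}_\tau$, using the taxed dynamics \eqref{eq:gbm_tax}--\eqref{eq:langevin_tax}. The single structural fact doing all the work is that $\mathcal{T}_\tau$ subtracts the \emph{same} constant $\tw$ from every drift and leaves every second-order coefficient untouched. This holds because the tax multiplies wealth by the deterministic factor $e^{-\tw\,\dd t}$, which has zero quadratic variation.

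\textbf{Items 1 and 2.} These are immediate. The difference $(v_1-\tw)-(v_2-\tw)$ cancels algebraically. The diffusion coefficients are left fixed by $\mathcal{T}_\tau$, so their ratio is trivially preserved.

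\textbf{Item 3.} First I will make precise what $D_{ij}$ means. In the multi-asset setting of \eqref{eq:langevin_port}, I read it as $\tfrac12\Sigma_{ij}$, the cross-diffusion coefficient. Then $D_i + D_j - 2D_{ij} = \tfrac12(\sigma_i^2+\sigma_j^2-2\Sigma_{ij})$ is half the instantaneous variance of the log-return spread $\dd x_i - \dd x_j$.

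Next I will check that the tax leaves $\Sigma_{ij}$ unchanged. Apply It\^o's lemma to each asset's taxed log-price, $\dd x_i = (\mu_i - \tw - \sigma_i^2/2)\,\dd t + \sigma_i\,\dd B^i_t$. The martingale parts are identical to the untaxed case, so the quadratic covariations $\dd\langle x_i, x_j\rangle = \Sigma_{ij}\,\dd t$ are unchanged. Combined with item 1 for the numerator, the ratio is invariant.

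\textbf{Item 4.} This is where the real content sits. I will invoke the standard Merton solution $\mathbf{w}^* = \gamma^{-1}\bm{\Sigma}^{-1}(\bm{\mu}-r_f\mathbf{1})$, or more generally any optimiser depending only on excess drifts and $\bm{\Sigma}$.

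The key modelling point, and what I expect to be the main obstacle, is justifying that the risk-free asset is also taxed at market value, so that $r_f \mapsto r_f - \tw$. I will argue this from \eqref{eq:gbm_tax}: the tax is levied on total market net wealth, which includes the bond holding. Hence the self-financing wealth equation becomes
\[
\frac{\dd W}{W} = \bigl(r_f - \tw + \mathbf{w}^\top(\bm{\mu}-r_f\mathbf{1})\bigr)\dd t + \mathbf{w}^\top\bm{\Sigma}^{1/2}\dd\mathbf{B}_t ,
\]
so the tax only lowers the intercept. Since $\mathbf{w}$ enters only through the excess-return and variance terms, the first-order condition in $\mathbf{w}$ is unchanged, and so is the maximiser.

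For CRRA utility, the additive constant $-\tw$ in the growth rate factors out of the value function as a multiplicative term. The optimal weights are therefore unchanged even beyond the log-utility case. I will note this as the step needing care, since it would fail if the tax exempted some asset class, e.g.\ $r_f$ untaxed.
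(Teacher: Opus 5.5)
Your proposal is correct and follows the same route as the paper. Items (i)--(iii) rest on $\mathcal{T}_\tau$ subtracting one common constant from every drift while leaving all second-order coefficients fixed, and item (iv) rests on the risk-free holding also being taxed, so that $r_f \to r_f - \tau_w$ leaves the excess-return vector $\bm{\mu} - r_f\mathbf{1}$ that enters the optimal weights invariant; you are simply more explicit than the paper, verifying via unchanged quadratic covariations and the self-financing wealth equation what the paper only asserts (that the cross terms $D_{ij}$ are untouched and that the bond is taxed).
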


\begin{proof}
Properties (i)--(iii) follow directly from the linearity of
$\mathcal{T}_\tau$ and the fact that it shifts all drifts by the same
constant.  Property (iv) uses the fact that the risk-free rate is also
subject to the wealth tax: $r_f \to r_f - \tw$.  The excess return
$\mu_i - r_f$ is invariant.  Since the Markowitz optimisation depends
only on excess returns and the covariance matrix~$\bm{\Sigma}$
(which involves only the diffusion coefficients), the optimal weights
are unchanged.
\end{proof}

\begin{remark}[Physical interpretation]
In the language of statistical physics, the tax is a uniform external
field that couples identically to all degrees of freedom.  Such a field
shifts the equilibrium of every mode by the same amount; the relative
structure---which modes are excited, which are suppressed---is
preserved.  This is precisely the content of the neutrality theorem in
\citet{Froeseth2026N}: the tax contracts the opportunity set
homothetically without distorting its shape.
\end{remark}

\subsection{Probability current and detailed balance}

In an untaxed system with no income or consumption, the probability
current is $J_0 = v\pi - D\,\partial\pi/\partial x$.  In the taxed
system, $J_\tau = (v - \tw)\pi - D\,\partial\pi/\partial x$.  The
\emph{difference} is
\begin{equation}\label{eq:current_diff}
  J_\tau - J_0 = -\tw \, \pi(x,t) \,.
\end{equation}
This is a uniform reduction of the probability current, proportional to
the local density.  No new currents are created between states that
did not already have a current; no existing currents are reversed.  The
tax removes probability uniformly, like a spatially uniform decay
rate.

If the untaxed system is in detailed balance ($J_0 = 0$ at steady
state), then the taxed system has $J_\tau = -\tw \pi < 0$: a uniform
leftward current representing the steady drain of wealth.  Detailed
balance is broken, but in the most benign way possible---the system is
driven uniformly, with no state-dependent distortion.  (Note that
detailed balance requires a stationary distribution, which pure GBM does
not possess; the confining mechanisms of \Cref{sec:steady} are needed.
The argument here applies to the stationary regime of such an extended
model.)

\subsection{Robustness of the drift-shift symmetry}\label{sec:robustness}

The preceding results were derived under geometric Brownian motion:
normally distributed log-returns with constant drift and volatility.
We now show that the drift-shift symmetry is robust to two important
generalisations---non-Gaussian return distributions and stochastic
volatility---because it rests on the \emph{tax mechanism}, not on the
distributional form.

\subsubsection{Beyond Gaussian returns}\label{sec:loc_scale}

The neutrality analysis in \citet{Froeseth2026N} shows that
the proportional wealth tax acts as a deterministic multiplicative
scalar on wealth, independently of the return realisation.  After $n$
periods, a taxed investor's wealth is
\begin{equation}\label{eq:mult_sep}
  W_n^{\mathrm{taxed}} = (1 - \tw)^n \; W_n^{\mathrm{untaxed}} \,.
\end{equation}
This identity requires only two properties of the tax: proportionality
(a constant fraction~$\tw$ of all holdings) and universality (the same
rate on every asset).  It makes no reference to the distribution of
returns.  In log-wealth coordinates, the multiplicative factor
becomes an additive constant:
\begin{equation}\label{eq:log_sep}
  x_n^{\mathrm{taxed}} = x_n^{\mathrm{untaxed}} + n \ln(1 - \tw) \,,
\end{equation}
so the tax is a deterministic translation in log-wealth space,
regardless of whether the increments of $x_n^{\mathrm{untaxed}}$ are
Gaussian, Student-$t$, or drawn from any other distribution with
well-defined moments.

\begin{proposition}[Distribution-free drift shift]\label{prop:dist_free}
Let the return process be any continuous-time It\^{o} diffusion
\begin{equation}\label{eq:gen_ito}
  \dd x = \mu(x, t)\, \dd t + \sigma(x, t)\, \dd B_t \,,
\end{equation}
with possibly state-dependent drift $\mu(x,t)$ and diffusion
$\sigma(x,t)$.  Under a proportional wealth tax at rate~$\tw$,
the taxed log-wealth satisfies
\begin{equation}\label{eq:gen_ito_tax}
  \dd x^{\mathrm{taxed}} = \bigl[\mu(x, t) - \tw\bigr]\, \dd t
    + \sigma(x, t)\, \dd B_t \,.
\end{equation}
The corresponding Fokker--Planck equation for the taxed density
$\pi_\tau(x,t)$ is
\begin{equation}\label{eq:fp_gen_tax}
  \frac{\partial \pi_\tau}{\partial t}
  = -\frac{\partial}{\partial x}\bigl[(\mu(x,t) - \tw)\, \pi_\tau\bigr]
    + \frac{1}{2}\frac{\partial^2}{\partial x^2}
      \bigl[\sigma(x,t)^2\, \pi_\tau\bigr] \,.
\end{equation}
The tax modifies only the drift coefficient
$\mu \to \mu - \tw$; the diffusion coefficient
$\sigma(x,t)^2/2$ is unchanged.
\end{proposition}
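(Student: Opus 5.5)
The plan is to model the tax at the level of wealth rather than log-wealth, and then transfer the result to log-wealth with It\^{o}'s lemma. Let $W = e^{x}$ be untaxed wealth, where $x$ solves \eqref{eq:gen_ito}. By It\^{o}'s lemma,
\[
\frac{\dd W}{W} = \bigl[\mu(x,t) + \tfrac12\sigma(x,t)^2\bigr]\dd t + \sigma(x,t)\,\dd B_t .
\]
Following \eqref{eq:gbm_tax}, the proportional tax at market value removes the fraction $\tw\,\dd t$ of holdings in each instant. The taxed wealth therefore obeys the same equation with the deterministic term $-\tw\,\dd t$ added, and with the same Brownian increment. Applying It\^{o}'s lemma again to $x^{\mathrm{taxed}} = \ln W^{\mathrm{taxed}}$ removes the $\tfrac12\sigma^2$ term. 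This gives exactly \eqref{eq:gen_ito_tax}, because the extra term has finite variation and contributes nothing to the quadratic variation.

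A cleaner route, and the one I would actually write, uses the continuous-time form of the multiplicative separability \eqref{eq:mult_sep}. Rather than treating the coefficients of $x^{\mathrm{taxed}}$ as evaluated along $x^{\mathrm{taxed}}$ itself, I define the taxed path through the untaxed one:
\[
W^{\mathrm{taxed}}_t = e^{-\tw t}\,W_t, \qquad x^{\mathrm{taxed}}_t = x_t - \tw t .
\]
This is the continuum limit of \eqref{eq:log_sep}, with $n\ln(1-\tw)\approx -\tw t$. Differentiating gives $\dd x^{\mathrm{taxed}} = \dd x - \tw\,\dd t$, which is \eqref{eq:gen_ito_tax} with the coefficients evaluated along the underlying return state.

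The main obstacle is this interpretive point: in $\mu(x,t)$ and $\sigma(x,t)$, which argument do the coefficients take? If the returns depend on the state of the asset, that state is unaffected by who holds the asset or how much tax is paid, so the coefficients are those of the untaxed process. If instead the coefficients genuinely depend on taxed log-wealth, I read \eqref{eq:gen_ito_tax} as a modelling statement. In that case the return-generating mechanism as a function of current wealth is unchanged, and the tax adds only a deterministic drain. I would state this reading explicitly, and note that both readings give a diffusion term that is literally unchanged.

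Finally, the Fokker--Planck equation \eqref{eq:fp_gen_tax} follows from the standard forward Kolmogorov correspondence already used for \eqref{eq:fp_W}. The drift $b = \mu - \tw$ and diffusion $a = \sigma^2$ give $\partial_t\pi_\tau = -\partial_x(b\,\pi_\tau) + \tfrac12\partial_x^2(a\,\pi_\tau)$. To verify it, I would check it weakly: apply It\^{o}'s formula to a test function $f(x^{\mathrm{taxed}}_t)$, take expectations, and integrate by parts. Since $a$ is the same function as in the untaxed problem, the diffusion coefficient $\sigma^2/2$ is unchanged, and the only modification is $\mu \to \mu - \tw$ in the advective flux. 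This matches the current difference already found in \eqref{eq:current_diff}.
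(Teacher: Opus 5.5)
Your first paragraph is essentially the paper's proof. You write the wealth SDE with coefficients evaluated at the investor's \emph{current} log-wealth, add the deterministic drain $-\tw W\,\dd t$, and apply It\^{o}'s lemma to $\ln W$. On that reading $x^{\mathrm{taxed}}$ is a Markov diffusion in its own state, with drift $\mu(x^{\mathrm{taxed}},t)-\tw$ and noise $\sigma(x^{\mathrm{taxed}},t)$, and \eqref{eq:fp_gen_tax} is simply its forward Kolmogorov equation.

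The gap is in the ``cleaner route'' you say you would actually write, and in your claim that the choice of reading does not matter. The pathwise identity $x^{\mathrm{taxed}}_t = x_t - \tw t$ holds only when the coefficients are evaluated along the \emph{untaxed} path. Now run your weak check on this process: It\^{o} on $f(x^{\mathrm{taxed}}_t)$, take expectations, integrate by parts. To do this you must express $\mu(x_t,t)$ and $\sigma(x_t,t)$ as functions of the variable whose density is $\pi_\tau$, namely as $\mu(x^{\mathrm{taxed}}_t+\tw t,t)$ and $\sigma(x^{\mathrm{taxed}}_t+\tw t,t)$. What comes out is
\[
  \partial_t\pi_\tau
  = -\partial_y\bigl[(\mu(y+\tw t,t)-\tw)\,\pi_\tau\bigr]
    + \tfrac12\,\partial_y^2\bigl[\sigma(y+\tw t,t)^2\,\pi_\tau\bigr],
\]
equivalently $\pi_\tau(y,t)=\pi(y+\tw t,t)$. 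This is not \eqref{eq:fp_gen_tax} unless the coefficients are state-independent. In particular, the diffusion coefficient seen by $\pi_\tau$ is a moving translate of $\sigma^2/2$, so your statement that ``$a$ is the same function as in the untaxed problem'' is false on this reading whenever $\sigma$ depends on $x$.

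Conversely, on the reading under which \eqref{eq:fp_gen_tax} is true, the pathwise identity fails. Subtracting the two SDEs, driven by the same $B$, gives
\[
\dd(x^{\mathrm{taxed}}-x)=[\mu(x^{\mathrm{taxed}},t)-\mu(x,t)-\tw]\,\dd t+[\sigma(x^{\mathrm{taxed}},t)-\sigma(x,t)]\,\dd B_t ,
\]
which in general reduces to $-\tw\,\dd t$ only when $\mu$ and $\sigma$ do not vary with $x$. For example, take $\mu(x)=-\kappa x$, the mean-reverting case the paper itself mentions, with constant $\sigma$. The taxed process is then an Ornstein--Uhlenbeck process centred at $-\tw/\kappa$, whereas $x_t-\tw t\to-\infty$.

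The multiplicative separability \eqref{eq:mult_sep}--\eqref{eq:log_sep} presupposes that taxed and untaxed investors earn identical returns. Dependence of returns on the investor's own wealth destroys exactly that, so separability cannot carry the state-dependent case. If instead the coefficients depend on an exogenous asset state rather than on log-wealth, the correct object is a joint Fokker--Planck equation in log-wealth and that state, as in the Heston subsection, not \eqref{eq:fp_gen_tax}.

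So drop the second route and fix the reading explicitly as in your first paragraph: coefficients evaluated at $\ln W^{\mathrm{taxed}}$. The Kolmogorov step in your last paragraph then goes through, and the argument coincides with the paper's.
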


\begin{proof}
The proportional wealth tax removes a fraction~$\tw$ of wealth per
unit time.  In the wealth-level dynamics
$\dd W = W[\mu_W(x,t)\,\dd t + \sigma_W(x,t)\,\dd B_t]$,
the tax enters as an additional deterministic drain $-\tw W\,\dd t$
in the drift, giving
$\dd W = W[(\mu_W - \tw)\,\dd t + \sigma_W\,\dd B_t]$.
Applying It\^{o}'s lemma to $x = \ln W$ yields
\eqref{eq:gen_ito_tax}, since the It\^{o} correction
$-\sigma_W^2/2$ and the noise term $\sigma_W\,\dd B_t$ are
both independent of~$\tw$.  The Fokker--Planck equation
\eqref{eq:fp_gen_tax} follows from the standard forward
Kolmogorov equation for~\eqref{eq:gen_ito_tax}.
\end{proof}

\begin{remark}[Scope of the distribution-free result]
\Cref{prop:dist_free} shows that the drift-shift structure
$\mu \to \mu - \tw$ is preserved for any It\^{o} diffusion,
including processes with state-dependent drift and volatility
(e.g.\ mean-reverting returns, local volatility models).  The
Gaussian propagator~\eqref{eq:propagator} is specific to the
constant-coefficient case; the drift-shift symmetry is not.  In
particular, the neutrality result of
\Cref{prop:neutrality}---that relative drifts, diffusion ratios,
and optimal portfolio weights are tax-invariant---extends to any
return process of the form~\eqref{eq:gen_ito}, provided the tax
applies proportionally and universally.

The explicit formulas for the Pareto exponent~\eqref{eq:pareto} and
the spectral gap~\eqref{eq:spectral_gap} do use the
constant-coefficient (GBM) assumption, because they depend on the
specific form of the stationary distribution.  The qualitative
results---that the tax steepens the Pareto tail and that relaxation
is slow---hold more broadly, but the exact functional forms would
differ for non-constant coefficients.
\end{remark}

\subsubsection{Stochastic volatility}\label{sec:stochvol}

The GBM assumption also fixes the volatility~$\sigma$ as constant.
Empirically, asset return volatility is time-varying, with
well-documented clustering, mean reversion, and leverage effects
\citep{Heston1993}.  These features are captured by stochastic
volatility models, in which the variance is itself a random process.

We consider the Heston model as a concrete and widely used
example.  The risky asset price and its instantaneous variance
form a pair of coupled diffusions:
\begin{align}
  \frac{\dd S}{S} &= \mu \, \dd t + \sqrt{v_t} \, \dd B_t^{(1)} \,,
  \label{eq:heston_S}\\[3pt]
  \dd v_t &= \lambda(\theta - v_t) \, \dd t
    + \kappa \sqrt{v_t} \, \dd B_t^{(2)} \,,
  \label{eq:heston_v}
\end{align}
where $v_t = \sigma_t^2$ is the instantaneous variance, $\theta > 0$
the long-run mean, $\lambda > 0$ the rate of mean reversion,
$\kappa > 0$ the volatility of variance, and $B_t^{(1)},
B_t^{(2)}$ are standard Brownian motions with
$\mathrm{corr}(\dd B^{(1)}, \dd B^{(2)}) = \rho$.  When $\rho < 0$,
negative returns coincide with rising volatility (the leverage
effect).

The wealth of an investor who allocates a fraction~$w$ to the risky
asset and the remainder to a risk-free asset with rate~$r_f$, subject
to a proportional wealth tax, evolves as
\begin{equation}\label{eq:heston_dW}
  \dd W = W\bigl[r_f + w(\mu - r_f) - \tw\bigr]\,\dd t
    + wW\sqrt{v_t}\,\dd B_t^{(1)} \,.
\end{equation}
The Fokker--Planck equation for the joint density
$\pi(x, v, t)$ of log-wealth $x = \ln W$ and variance~$v$ is
two-dimensional.  Crucially, the tax rate~$\tw$ appears only in the
drift of~$x$---the dynamics of $v$ (\Cref{eq:heston_v}) are
entirely independent of the tax.  This is the two-dimensional
analogue of the drift-shift symmetry: the tax acts along the wealth
axis only, leaving the volatility state untouched.

\begin{proposition}[Portfolio neutrality under stochastic volatility]
\label{prop:sv_neutral}
Under the Heston model
\eqref{eq:heston_S}--\eqref{eq:heston_v} with CRRA
preferences $U(C) = C^{1-\gamma}/(1-\gamma)$ and a proportional
wealth tax on all assets, the optimal portfolio weight $w^*$ is
independent of the wealth tax rate~$\tw$.
\end{proposition}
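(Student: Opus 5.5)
The plan is to solve the investor's problem by dynamic programming with the Hamilton--Jacobi--Bellman (HJB) equation and to use CRRA homotheticity to factor out wealth. Let $J(W,v,t)$ denote the value function of maximising $\E\bigl[\int e^{-\delta s}U(C_s)\,\dd s\bigr]$ (or terminal utility $U(W_T)$; the argument is the same). The state is $(W,v)$, with dynamics given by \eqref{eq:heston_dW}, in which consumption $-C\,\dd t$ is added to the drift, and by \eqref{eq:heston_v}. I would write the HJB equation as
\begin{equation*}
0=\max_{w,C}\Bigl\{U(C)-\delta J+J_t+J_W\bigl(W[r_f+w(\mu-r_f)-\tw]-C\bigr)+\tfrac12 J_{WW}w^2W^2v+J_v\lambda(\theta-v)+\tfrac12 J_{vv}\kappa^2 v+J_{Wv}\,w W\rho\kappa v\Bigr\}.
\end{equation*}
The key structural point is the one already noted in the text: $\tw$ enters only through the term $-\tw W J_W$. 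This term is linear in $J_W$ and contains no $w$.

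Next I would use the homothetic ansatz $J(W,v,t)=\frac{W^{1-\gamma}}{1-\gamma}\,f(v,t)$, which is standard for CRRA preferences with wealth-linear dynamics. The first-order condition in $w$ then gives
\begin{equation*}
w^*=\frac{\mu-r_f}{\gamma v}+\frac{\rho\kappa}{\gamma}\,\frac{f_v}{f},
\end{equation*}
that is, the myopic term plus the Merton intertemporal hedging term. The first term is visibly free of $\tw$. For the second term, I substitute the ansatz back into the HJB equation, which yields a linear or Riccati-type PDE for $f(v,t)$. In that PDE, $\tw$ appears only in a term of the form $-(1-\gamma)\tw f$, a constant multiple of $f$. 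I would also note that consumption enters through $C^*=W f^{-1/\gamma}$, up to discounting.

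The main obstacle is showing that $f_v/f$ does not depend on $\tw$. There are two cases.

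\textbf{Terminal-wealth case.} I factor $f(v,t)=e^{-(1-\gamma)\tw(T-t)}g(v,t)$. The exponential factor exactly absorbs the $\tw$ term, so $g$ solves the untaxed equation with the same terminal condition $g(v,T)=1$. Hence $f_v/f=g_v/g$ is independent of $\tw$.

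\textbf{Intermediate-consumption case.} The $f^{1-1/\gamma}$ consumption term spoils exact factoring. Here I would argue instead that the tax is equivalent to replacing $r_f$ by $r_f-\tw$ while keeping the excess return $\mu-r_f$ fixed, as in \Cref{prop:neutrality}(iv). The taxed problem is then an untaxed Heston--Merton problem with a lower riskless rate. I would then check, via the Kraft/Liu affine solution $f=\exp(A(t)+B(t)v)$ (in the terminal case), or via the homothetic structure in the consumption case, that $B$, the only coefficient entering $w^*$, solves a Riccati ODE that does not involve $r_f$. Only $A$ carries $r_f$ and $\tw$. I expect that the consumption case holds exactly only in the infinite-horizon or terminal setting, or else approximately, so I would state the result primarily for terminal-wealth CRRA and mention the consumption extension.

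Finally, I would close with a verification step: confirm that the candidate $J$ satisfies the HJB equation under the usual Heston integrability and Feller conditions. Since $\tw$ does not enter them, these conditions are the same as in the untaxed problem.
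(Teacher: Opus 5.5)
Your terminal-wealth argument is correct. It differs from the paper's at the decisive step. Both routes share the HJB equation, the ansatz $J=W^{1-\gamma}f(v,t)/(1-\gamma)$, and the myopic-plus-hedging formula for $w^*$, so everything hinges on $f_v/f$. The paper posits $f=\exp(A(t)+B(t)v)$ and argues that the Riccati equation for $B$ contains no $\tau_w$. You instead factor $f=e^{-(1-\gamma)\tau_w(T-t)}g$, so that $g$ solves the untaxed equation with the same terminal data and $f_v/f=g_v/g$.

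Your route needs no closed form, and that matters. With the constant drift $\mu$ of the paper's Heston specification, the $f$-equation contains $(1-\gamma)(\mu-r_f)^2 f/(2\gamma v)$. That term is not affine in $v$, so the exponential-affine ansatz does not actually solve the equation (the Kraft/Liu solution you cite assumes $\mu-r_f\propto v$). Drop that aside and rely on the factorization. You can also bypass PDE uniqueness and verification entirely. For any control $w_t$, the process $\tilde W_t=e^{\tau_w t}W_t$ obeys the untaxed dynamics. The taxed objective $\mathbb{E}[W_T^{1-\gamma}/(1-\gamma)]$ is therefore $e^{-(1-\gamma)\tau_w T}$ times the untaxed one, and the maximisers coincide.

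The consumption extension, however, would fail, and not for technical reasons. Set $\tilde W_t=e^{\tau_w t}W_t$ and $\tilde C_t=e^{\tau_w t}C_t$. The taxed problem is then exactly the untaxed one with discount rate $\delta+(1-\gamma)\tau_w$; this is what your ``lower $r_f$, same excess return'' reformulation amounts to. With intermediate consumption, the portfolio depends on that rate.

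Concretely, the consumption term $\gamma f^{1-1/\gamma}$ is homogeneous of degree $1-1/\gamma\neq1$. Equal $f_v/f$ under two tax rates would force $f_1=c(t)f_0$. Substituting gives $c'/c+\mathrm{const}=\gamma(1-c^{-1/\gamma})f_0^{-1/\gamma}$ with a nonzero constant, which is impossible since $f_0$ varies with $v$. So for $\gamma\neq1$ and $\rho\neq0$ the hedging demand moves with $\tau_w$, at finite and infinite horizon alike. Your guess that the infinite-horizon case is exact is therefore wrong. Nor does any $\tau_w$-free Riccati equation for $B$ exist, since $\gamma f^{-1/\gamma}$ is not affine in $v$ under $f=\exp(A+Bv)$. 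That is the same step the paper's appendix takes when it separates the PDE despite the $g(f)$ term.

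Keep your restriction to terminal-wealth CRRA. It is where the claim is true, not a special case from which the consumption statement can be recovered. A consumption version holds exactly only in cases such as $\gamma=1$, $\rho=0$, or Epstein--Zin preferences with unit EIS (Chacko--Viceira's exact case, with constant $C/W$).
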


\begin{proof}
Under CRRA utility, the value function admits the separable form
$J(W, v, t) = W^{1-\gamma} f(v, t)/(1-\gamma)$, where $f > 0$
encodes the dependence on the volatility state and the investment
horizon.  The first-order condition for the portfolio weight in
the Hamilton--Jacobi--Bellman equation yields (see
\Cref{app:stochvol} for the full derivation)
\begin{equation}\label{eq:wstar_sv}
  w^* = \underbrace{\frac{\mu - r_f}{\gamma v}}_{\text{myopic}}
    + \underbrace{\frac{f_v}{f}
      \cdot \frac{\kappa\rho}{\gamma}}_{\text{hedging}} \,.
\end{equation}
The myopic demand depends on the excess return, risk aversion,
and current variance---not on wealth or the tax rate.  The hedging
demand---the intertemporal component identified by
\citet{Merton1973}---depends on the ratio $f_v/f$, where $f(v,t)$ satisfies a PDE
obtained by substituting the separable value function into the HJB
equation.  In this PDE, the tax rate~$\tw$ appears only in a
term $(1-\gamma)(r_f - \tw)f$ that shifts the effective discount
rate but does not interact with~$v$.  Using the standard
exponential-affine ansatz $f(v,t) = \exp(A(t) + B(t)v)$
\citep{ChackoViceira2005}, the Riccati equation for $B(t)$ collects
only the $v$-dependent terms and is therefore independent
of~$\tw$.  Since $f_v/f = B(t)$, the hedging demand is
tax-invariant.  Consequently, both components of $w^*$ are
independent of~$\tw$.
\end{proof}

\begin{corollary}[General Markov diffusions]\label{cor:markov}
The result extends to any Markov diffusion model with $K$ risky
assets and $M$ state variables
$\mathbf{X}_t = (X_1, \ldots, X_M)^\top$,
in which the expected returns $\mu_i(\mathbf{X})$, volatilities
$\sigma_{ij}(\mathbf{X})$, and state dynamics
$a_m(\mathbf{X})$, $b_{mj}(\mathbf{X})$ depend on the state but
not on the investor's wealth.  Under CRRA preferences, the value
function separates as
$J = W^{1-\gamma} f(\mathbf{X}, t)/(1-\gamma)$,
and the optimal portfolio weights
\[
  \mathbf{w}^* = \frac{1}{\gamma}\mathbf{V}(\mathbf{X})^{-1}
    \bigl(\boldsymbol{\mu}(\mathbf{X}) - r_f\mathbf{1}\bigr)
    + \frac{1}{\gamma}\mathbf{V}(\mathbf{X})^{-1}
      \boldsymbol{\Phi}(\mathbf{X})\,
      \frac{\nabla_{\mathbf{X}} f}{f}
\]
are independent of the wealth tax rate~$\tw$.  This encompasses
the Heston, Hull--White, SABR, and affine term structure models
as special cases (see \citet{Froeseth2026E} for the full proof).
\end{corollary}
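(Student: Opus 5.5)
The argument generalises the proof of \Cref{prop:sv_neutral} from the Heston case to a general Markov state vector. I start from the taxed wealth dynamics
\[
  \dd W = W\bigl[r_f - \tw + \mathbf{w}^\top(\boldsymbol{\mu}(\mathbf{X}) - r_f\mathbf{1})\bigr]\dd t + W\,\mathbf{w}^\top \bm{\sigma}(\mathbf{X})\,\dd \mathbf{B}_t
\]
(with consumption $C$ subtracted if it is included). The tax enters only as the constant $-\tw$ in the wealth drift, exactly as in \Cref{prop:dist_free}. The state dynamics are $\dd X_m = a_m(\mathbf{X})\dd t + b_{mj}(\mathbf{X})\dd B^{(j)}_t$, and they do not involve $\tw$ at all.

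\textbf{HJB and separation.} I write the HJB equation for $J(W,\mathbf{X},t)$. Every term in the wealth dynamics is homogeneous of degree one in $W$, and CRRA utility is homogeneous of degree $1-\gamma$. Verification then gives the ansatz $J = W^{1-\gamma} f(\mathbf{X},t)/(1-\gamma)$, and I divide the HJB by $W^{1-\gamma}$.

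\textbf{First-order condition.} I take the first-order condition in $\mathbf{w}$. Set $\mathbf{V} = \bm{\sigma}\bm{\sigma}^\top$ and let $\boldsymbol{\Phi}$ be the covariance of asset returns with the state innovations, $\boldsymbol{\Phi} = \bm{\sigma}\mathbf{b}^\top$. The condition is
\[
  \boldsymbol{\mu} - r_f\mathbf{1} - \gamma \mathbf{V}\mathbf{w} + \boldsymbol{\Phi}\,\nabla_{\mathbf{X}} f/f = 0 .
\]
The $-\tw$ term is independent of $\mathbf{w}$, so it drops out. This yields the displayed formula for $\mathbf{w}^*$, in which $\tw$ can only enter through $\nabla_{\mathbf{X}} f/f$.

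\textbf{The reduced PDE for $f$ (the main step).} I substitute $\mathbf{w}^*$ back into the HJB to get the reduced PDE for $f$. With optimal consumption, this PDE has the form
\[
  f_t + \mathcal{A}f + (1-\gamma)(r_f-\tw)f + Q\bigl(\mathbf{X},\nabla f/f\bigr)f + \gamma f^{1-1/\gamma}\text{-type term} = 0,
\]
where $\mathcal{A}$ is the $\tw$-free generator of $\mathbf{X}$ and $Q$ is a $\tw$-free quadratic form. The cleanest way to handle it is to factor $f(\mathbf{X},t) = e^{(1-\gamma)(r_f-\tw)(T-t)}\tilde f(\mathbf{X},t)$ and check that $\tilde f$ solves a $\tw$-free problem. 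Then $\nabla_{\mathbf{X}} f/f = \nabla_{\mathbf{X}}\tilde f/\tilde f$ is tax-invariant, which proves the claim.

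\textbf{Main obstacle.} This factorisation works directly for terminal-wealth utility, since the terminal condition $f(\cdot,T)=1$ is unaffected. With intertemporal consumption, the nonlinear term $f^{1-1/\gamma}$ breaks the factorisation: the tax then changes the time-profile of the consumption–wealth ratio, so the hedging term generally depends on $\tw$ unless $\gamma=1$ or the horizon is infinite. I would therefore state the result for terminal-wealth (or log) utility and invoke \citet{Froeseth2026E} for the remaining cases.
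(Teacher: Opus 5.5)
\paragraph{Terminal-wealth case: correct, and a more robust route than the paper's.}
Your argument for terminal-wealth utility is correct, but it is not the paper's route. The paper's Heston proof (\Cref{prop:sv_neutral}, \Cref{app:stochvol}), which the corollary generalises, posits $f=\exp(A(t)+B(t)v)$ and observes that $\tw$ sits only in the $A$-equation. You instead use that the reduced operator is homogeneous of degree one in $f$: the quadratic term depends only on $\nabla_{\mathbf X}f/f$. A purely time-dependent factor therefore absorbs the constant $\tw$-term and leaves the terminal condition untouched, so $\nabla_{\mathbf X}f/f$ is unchanged.

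Your version needs no closed form, so it genuinely covers non-affine models such as SABR. The affine ansatz is not even exact in the paper's own Heston setting with constant $\mu$, where the myopic term contributes $(1-\gamma)(\mu-r_f)^2/(2\gamma v)$. One fix is needed: for Hull--White or affine term-structure models, $r_f=r_f(\mathbf X)$. You should factor out only $e^{-(1-\gamma)\tw(T-t)}$ and keep $(1-\gamma)r_f(\mathbf X)f$ inside the $\tw$-free equation.

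\paragraph{The consumption obstacle is real, but the infinite-horizon exception is wrong.}
The obstacle you flag is also a gap in the paper's own argument. With optimal consumption, $g(f)=\gamma f^{1-1/\gamma}$. After dividing by $f=e^{A+Bv}$ it leaves $\gamma e^{-(A+Bv)/\gamma}$, which is not affine in $v$. So the claimed split into a $\tw$-free Riccati equation for $B$ plus an $A$-equation does not hold: the level shift induced by $\tw$ feeds back into $f_v/f$.

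In the reduced PDE, $\tw$ enters exactly as a change of time preference, $\delta\mapsto\delta+(1-\gamma)\tw$. Suppose $\nabla_{\mathbf X}f/f$ were invariant. Then $f_{\tw}=c(t)\,f_0$, and substituting into the PDE gives
\[
  \frac{c'}{c}-(1-\gamma)\tw+\gamma\bigl(c^{-1/\gamma}-1\bigr)f_0^{-1/\gamma}=0 .
\]
For $\gamma\neq1$ and $\tw\neq0$, this forces the consumption--wealth ratio $f_0^{-1/\gamma}$ to be state-independent. A stationary problem merely sets $c'=0$ and yields the same condition, so an infinite horizon does not help.

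In complete markets you can see this directly. The portfolio is a wealth-weighted mix of horizon-$s$ consumption-strip portfolios, each of which is $\tw$-invariant. The weights, however, are proportional to $e^{-[\delta+(1-\gamma)\tw](s-t)/\gamma}$ times a $\tw$-free factor, so the mix shifts with $\tw$ at every horizon. The genuine exceptions are terminal-wealth utility, $\gamma=1$, and a constant opportunity set (no hedging motive).

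\paragraph{The fallback citation does not close the remaining cases.}
You cannot settle the consumption cases by invoking \citet{Froeseth2026E}: you have just argued they are false in general, and a citation cannot repair that. The coherent form of your proposal does two things. It proves the corollary for terminal-wealth (or log) utility. It also states that with intermediate consumption and $\gamma\neq1$ only the myopic component is $\tw$-invariant, while the hedging component generally is not. That is a correction to the statement, not a missing step in your proof.
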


\begin{remark}[Two mechanisms, one conclusion]
The location-scale result (\Cref{sec:loc_scale}) and the
stochastic volatility result rest on different mechanisms.  The
former uses the algebraic structure of the tax (multiplicative
separability) and requires no utility specification.  The latter
requires CRRA preferences but places no restriction on the return
distribution---Heston returns are not in the location-scale
family.  Together, they show that the drift-shift symmetry and its
portfolio neutrality consequence are robust across a wide class of
models.  \Cref{app:geometry} develops a geometric interpretation
that unifies both mechanisms in the language of fiber bundles and
Galilean symmetry.
\end{remark}

\section{Distortion channels as symmetry breaking}\label{sec:channels}

The neutrality result of \Cref{sec:neutrality} rests on the wealth tax
entering as a uniform, state-independent drift shift.  Each of the
distortion channels identified in \citet{Froeseth2026E} breaks this
condition in a specific way.  We now classify them by the type of
modification they introduce into the Fokker--Planck equation.

\begin{remark}[Physical intuition: gravity versus friction]
The physical analogies are more than formal.  The neutral wealth tax
acts as a \emph{uniform gravitational field} on the wealth distribution:
it pulls every particle (investor) downward at the same rate,
regardless of composition, without altering the thermal fluctuations
(volatility).  Gravity modifies the drift but not the diffusion---precisely
the drift-shift symmetry.  Because the tax payment $\tw W$ is
deterministic given wealth, it enters the Langevin equation as a
force, not as noise; a deterministic force can only modify the drift
coefficient.

Each distortion channel introduces something beyond gravity.
Liquidity frictions are literally \emph{friction}: forced selling into
illiquid markets dissipates wealth in a manner that depends on the
state and couples to both drift and diffusion.  Book-value assessment
introduces an \emph{anisotropic} field that pulls different assets at
different rates.  Migration creates an \emph{absorbing boundary}---a
cliff edge in the potential landscape.  These analogies, summarised
alongside the formal modifications below, may help readers carry a
physical picture through the classification.
\end{remark}

\subsection{Channel 1: Book-value assessment}\label{sec:ch1}

When assets are taxed at book value rather than market value, different
assets attract different effective tax rates.  If asset~$i$ has a
book-to-market ratio~$\beta_i$, the effective tax rate is
$\tw^{(i)} = \tw \cdot \beta_i$.  The drift shift becomes
asset-dependent:
\begin{equation}\label{eq:ch1}
  v_i \;\to\; v_i - \tw \beta_i \,.
\end{equation}
The drift-shift transformation is no longer uniform: assets with low
book-to-market ratios (growth stocks, intangible-heavy firms) are
taxed less than assets with high book-to-market ratios (value stocks,
asset-heavy firms).  In the Fokker--Planck equation, the drift
coefficient becomes state-dependent through the portfolio composition,
since the portfolio-level effective tax rate depends on which assets the
investor holds.

\textbf{The leverage amplification.}
The anisotropic drift is amplified by the asymmetric treatment of debt
in the net wealth tax base.  Assets are assessed at book value
($\beta_i < 1$ for underassessed assets such as real estate), but
liabilities are deducted at face value ($\beta_{\mathrm{debt}} = 1$).
An investor with assets of market value~$A$ assessed at
$\beta A$ and debt~$D$ has:
\begin{equation}\label{eq:leverage_wedge}
  W_{\mathrm{tax}} = \beta A - D \,,
  \qquad W_{\mathrm{market}} = A - D \,,
\end{equation}
so the ratio $W_{\mathrm{tax}} / W_{\mathrm{market}} = (\beta A - D)/(A - D)$
decreases with leverage and can become negative even when
$W_{\mathrm{market}} > 0$.  This creates an incentive to lever up:
borrowing against underassessed assets reduces the tax base
disproportionately.  In the Fokker--Planck equation, the effective
drift depends not only on portfolio composition (through~$\beta_i$)
but also on the leverage ratio, coupling the debt decision to the
wealth dynamics.  Moreover, leveraged positions amplify the effective
volatility of net wealth---connecting the assessment channel to the
diffusion modification of Channel~2 below.

\textbf{Symmetry broken:} Uniformity of $\mathcal{T}_\tau$ across
assets.  The tax now couples differently to different degrees of
freedom, distorting relative returns and portfolio choice.  The
asymmetric treatment of debt amplifies the distortion for leveraged
investors.

\subsection{Channel 2: Liquidity frictions}\label{sec:ch2}

If the investor must sell assets to pay the tax and faces transaction
costs or illiquidity, the effective tax rate depends on the liquidity of
the portfolio.  In the extreme, an investor holding only illiquid
assets faces a higher effective burden than one holding liquid assets.
The Fokker--Planck equation acquires a \emph{state-dependent drift
modification}:
\begin{equation}\label{eq:ch2}
  v \;\to\; v - \tw - c(W, \ell) \,,
\end{equation}
where $c(W, \ell)$ is a friction cost that depends on wealth~$W$ (or
log-wealth~$x$) and a liquidity parameter~$\ell$.  For investors with
wealth concentrated in illiquid assets, $c$ is large; for liquid
portfolios, $c \approx 0$.

Moreover, liquidity frictions can modify the \emph{diffusion}
coefficient if forced selling at unfavourable prices increases effective
volatility:
\begin{equation}\label{eq:ch2_diff}
  D \;\to\; D + \Delta D(W, \ell) \,,
\end{equation}
introducing a state-dependent diffusion that couples the tax to the
stochastic structure of returns.

\textbf{Symmetry broken:} State-independence and drift-only coupling.
The tax now modifies both drift and diffusion, and does so differently
for different investors.

\subsection{Channel 3: Forced dividend extraction}\label{sec:ch3}

When firm owners extract dividends to pay the wealth tax, the firm's
capital stock is reduced.  If the firm faces financing frictions
(limited credit, costly equity issuance), this extraction reduces
investment and growth.  The wealth tax then \emph{couples the
investor's dynamics to the firm's dynamics}:
\begin{equation}\label{eq:ch3}
  \mu \;\to\; \mu(K) \quad \text{where} \quad
  \frac{\dd K}{\dd t} = f(K) - \delta K - \tw W \,,
\end{equation}
and $K$ is the firm's capital, $f(K)$ is the production function,
$\delta$ is depreciation.  The expected return~$\mu$ is no longer a
constant but depends on the firm's state, which in turn depends on the
cumulative tax extracted.  The Fokker--Planck equation acquires a
\emph{memory} through the coupling to the slow variable~$K$.

\textbf{Symmetry broken:} Constancy of drift.  The drift is endogenous,
coupled to firm-level dynamics that evolve on a different timescale.

\subsection{Channel 4: Migration}\label{sec:ch4}

If investors can emigrate to escape the wealth tax, the system acquires an
\emph{absorbing boundary} in wealth space.  An investor with wealth
$W > W^*$ (where $W^*$ is the threshold at which the tax burden exceeds
the cost of emigration) may exit the system entirely.  The
Fokker--Planck equation is modified by a sink term:
\begin{equation}\label{eq:ch4}
  \frac{\partial \pi}{\partial t}
  = -v_\tau \frac{\partial \pi}{\partial x}
    + D \frac{\partial^2 \pi}{\partial x^2}
    - \gamma(x)\, \pi \,,
\end{equation}
where $\gamma(x)$ is a loss rate that increases sharply for
$x > x^* = \ln W^*$.  This drains probability from the upper tail of
the distribution, depleting the wealthiest investors.

\textbf{Symmetry broken:} The system is no longer closed.  The tax
induces a state-dependent outflow that preferentially removes
high-wealth investors, truncating the distribution.

\subsection{Channel 5: Market impact}\label{sec:ch5}

If forced asset sales to pay the tax move prices, the tax introduces a
\emph{flow-dependent} nonlinearity.  The square-root impact
law---a robust empirical regularity in market
microstructure---implies that aggregate selling pressure~$F$
depresses prices by an amount proportional to~$\sqrt{F}$.  When
combined with the low aggregate elasticity of demand documented by
\citet{GabaixKoijen2022}, even moderate tax-induced flows can
generate sizable price dislocations.  The effective return becomes
\begin{equation}\label{eq:ch5}
  \mu \;\to\; \mu - \alpha\sqrt{F(\tw, \mathcal{N})} \,,
\end{equation}
where $F$ depends on the tax rate and the number of investors selling
simultaneously.  This introduces a \emph{collective effect}: the drift
reduction experienced by each investor depends on the aggregate
behaviour of all investors.

In the Fokker--Planck equation, this appears as a nonlinear,
self-consistent drift:
\begin{equation}\label{eq:ch5_fp}
  v \;\to\; v - \tw
    - \alpha\sqrt{\tw \int W\, p(W,t)\, \dd W} \,.
\end{equation}
The drift depends on the first moment of the distribution itself,
creating a feedback loop between the individual dynamics and the
aggregate state.

\textbf{Symmetry broken:} Linearity and investor-independence.  The tax
couples each investor's dynamics to the aggregate distribution, introducing
mean-field interactions absent from the untaxed system.

\subsection{Summary of symmetry-breaking mechanisms}

\begin{center}
\renewcommand{\arraystretch}{1.4}
\begin{tabular}{@{}llll@{}}
\toprule
\textbf{Channel} & \textbf{FP modification} & \textbf{Symmetry broken} & \textbf{Physics analogue} \\
\midrule
Neutral tax & $v \to v - \tw$ & None & Uniform external field \\
1. Book value & $v_i \to v_i - \tw\beta_i$ & Uniformity across assets
  & Anisotropic field \\
2. Liquidity & $v, D \to v(W), D(W)$ & State-independence
  & State-dependent friction \\
3. Dividends & $\mu \to \mu(K(t))$ & Drift constancy
  & Coupled slow variable \\
4. Migration & $+$ sink $\gamma(x)\pi$ & Closed system
  & Absorbing boundary \\
5. Market impact & $v \to v(\langle W \rangle)$ & Investor-independence
  & Mean-field interaction \\
\bottomrule
\end{tabular}
\end{center}

\Cref{fig:channels} illustrates how each channel deforms the
propagator relative to the neutral (pure translation) case.

\begin{figure}[H]
\centering
\begin{tikzpicture}[
  every axis/.style={
    width=0.46\textwidth,
    height=0.32\textwidth,
    axis lines=left,
    xlabel={},
    ylabel={},
    ymin=0,
    ymax=0.75,
    xmin=-1,
    xmax=8,
    samples=150,
    domain=-1:8,
    ticks=none,
    every axis plot/.append style={thick},
    clip=true,
  },
  panel label/.style={
    font=\small\bfseries,
    anchor=north west,
  },
  centre line/.style={densely dotted, thin},
]

\begin{axis}[at={(0,10.0cm)}, name=pA]
  \draw[blue!60, centre line] (axis cs:4.6,0) -- (axis cs:4.6,0.43);
  \draw[red!60, centre line] (axis cs:3.6,0) -- (axis cs:3.6,0.43);
  \addplot[blue!70!black, solid]
    {1.3/(sqrt(2*3.14159*1.6))*exp(-(x-4.6)^2/(2*1.6))};
  \addplot[red!70!black, solid]
    {1.3/(sqrt(2*3.14159*1.6))*exp(-(x-3.6)^2/(2*1.6))};
  \draw[-{Stealth[length=2mm]}, black!60, thick]
    (axis cs:4.6,0.47) -- (axis cs:3.6,0.47);
  \node[font=\scriptsize, black!60, above] at (axis cs:4.1,0.47)
    {$\tau_w t$};
  \node[panel label] at (axis cs:-0.8,0.72)
    {(a) Neutral tax};
  \node[font=\scriptsize, black!50, align=center] at (axis cs:1.2,0.55)
    {same shape,\\shifted centre};
\end{axis}

\begin{axis}[at={(7.5cm,10.0cm)}, name=pB]
  \draw[blue!60, centre line] (axis cs:4.6,0) -- (axis cs:4.6,0.43);
  \draw[red!60, centre line] (axis cs:3.0,0) -- (axis cs:3.0,0.28);
  \draw[red!60, centre line] (axis cs:4.3,0) -- (axis cs:4.3,0.28);
  \draw[-{Stealth[length=2mm]}, black!60]
    (axis cs:4.6,0.50) -- (axis cs:4.3,0.50);
  \node[font=\scriptsize, black!60, above] at (axis cs:4.45,0.50)
    {$\tau_1$};
  \draw[-{Stealth[length=2mm]}, black!60]
    (axis cs:4.6,0.57) -- (axis cs:3.0,0.57);
  \node[font=\scriptsize, black!60, above] at (axis cs:3.8,0.57)
    {$\tau_2$};
  \addplot[blue!70!black, solid]
    {1.3/(sqrt(2*3.14159*1.6))*exp(-(x-4.6)^2/(2*1.6))};
  \addplot[red!70!black, solid]
    {1.3*(0.5/(sqrt(2*3.14159*1.2))*exp(-(x-3.0)^2/(2*1.2))
    +0.5/(sqrt(2*3.14159*1.2))*exp(-(x-4.3)^2/(2*1.2)))};
  \node[panel label] at (axis cs:-0.8,0.72)
    {(b) Book-value basis};
  \node[font=\scriptsize, black!50, align=center] at (axis cs:1.2,0.55)
    {anisotropic shift\\splits modes};
\end{axis}

\begin{axis}[at={(0,5.0cm)}, name=pC]
  \draw[blue!60, centre line] (axis cs:4.6,0) -- (axis cs:4.6,0.43);
  \draw[red!60, centre line] (axis cs:3.6,0) -- (axis cs:3.6,0.33);
  \draw[red!50, centre line] (axis cs:3.6-1.67,0) -- (axis cs:3.6-1.67,0.16);
  \draw[red!50, centre line] (axis cs:3.6+1.67,0) -- (axis cs:3.6+1.67,0.16);
  \draw[-{Stealth[length=2mm]}, black!60, thick]
    (axis cs:4.6,0.47) -- (axis cs:3.6,0.47);
  \node[font=\scriptsize, black!60, above] at (axis cs:4.1,0.47)
    {$\tau_w t$};
  \draw[{Stealth[length=1.5mm]}-{Stealth[length=1.5mm]}, red!50, thin]
    (axis cs:3.6-1.67,0.04) -- (axis cs:3.6+1.67,0.04);
  \node[font=\scriptsize, red!50, fill=white, inner sep=0.5pt]
    at (axis cs:3.6,0.04) {$\sigma_{\mathrm{eff}}$};
  \addplot[blue!70!black, solid]
    {1.3/(sqrt(2*3.14159*1.6))*exp(-(x-4.6)^2/(2*1.6))};
  \addplot[red!70!black, solid]
    {1.3/(sqrt(2*3.14159*2.8))*exp(-(x-3.6)^2/(2*2.8))};
  \node[panel label] at (axis cs:-0.8,0.72)
    {(c) Liquidity frictions};
  \node[font=\scriptsize, black!50, align=center] at (axis cs:1.2,0.55)
    {broadened:\\$D_{\mathrm{eff}} > D$};
\end{axis}

\begin{axis}[at={(7.5cm,5.0cm)}, name=pD]
  \draw[blue!60, centre line] (axis cs:4.6,0) -- (axis cs:4.6,0.43);
  \draw[red!60, centre line] (axis cs:3.4,0) -- (axis cs:3.4,0.43);
  \draw[-{Stealth[length=2mm]}, black!60, thick]
    (axis cs:4.6,0.47) -- (axis cs:3.4,0.47);
  \node[font=\scriptsize, black!60, above] at (axis cs:4.0,0.47)
    {$\tau_w t + \delta v \cdot t$};
  \addplot[blue!70!black, solid]
    {1.3/(sqrt(2*3.14159*1.6))*exp(-(x-4.6)^2/(2*1.6))};
  \addplot[red!70!black, solid]
    {1.3/(sqrt(2*3.14159*1.6))*exp(-(x-3.4)^2/(2*1.6))
     * max(0, 1 + 0.4*(x-3.4)/1.265)};
  \node[panel label] at (axis cs:-0.8,0.72)
    {(d) Dividend extraction};
  \node[font=\scriptsize, black!50, align=center] at (axis cs:1.2,0.55)
    {asymmetric:\\$v = v(t)$};
\end{axis}

\begin{axis}[at={(0,0)}, name=pE]
  \draw[blue!60, centre line] (axis cs:4.6,0) -- (axis cs:4.6,0.43);
  \draw[red!60, centre line] (axis cs:3.6,0) -- (axis cs:3.6,0.43);
  \draw[-{Stealth[length=2mm]}, black!60, thick]
    (axis cs:4.6,0.47) -- (axis cs:3.6,0.47);
  \node[font=\scriptsize, black!60, above] at (axis cs:4.1,0.47)
    {$\tau_w t$};
  \draw[black!60, dashed, thin] (axis cs:5.5,0) -- (axis cs:5.5,0.65);
  \node[font=\scriptsize, black!60, above] at (axis cs:5.5,0.65) {$x^*$};
  \addplot[blue!70!black, solid]
    {1.3/(sqrt(2*3.14159*1.6))*exp(-(x-4.6)^2/(2*1.6))};
  \addplot[red!70!black, solid]
    {1.3/(sqrt(2*3.14159*1.6))*exp(-(x-3.6)^2/(2*1.6))
     * (x < 5.5 ? 1 : exp(-5*(x-5.5)^2))};
  \node[panel label] at (axis cs:-0.8,0.72)
    {(e) Migration};
  \node[font=\scriptsize, black!50, align=center] at (axis cs:1.2,0.55)
    {absorbing\\boundary at $x^*$};
\end{axis}

\begin{axis}[at={(7.5cm,0)}, name=pF]
  \draw[blue!60, centre line] (axis cs:4.6,0) -- (axis cs:4.6,0.43);
  \draw[red!60, centre line] (axis cs:2.6,0) -- (axis cs:2.6,0.47);
  \draw[-{Stealth[length=2mm]}, black!60, thick]
    (axis cs:4.6,0.50) -- (axis cs:2.6,0.50);
  \node[font=\scriptsize, black!60, above] at (axis cs:3.6,0.50)
    {$\tau_w t + \Delta p$};
  \addplot[blue!70!black, solid]
    {1.3/(sqrt(2*3.14159*1.6))*exp(-(x-4.6)^2/(2*1.6))};
  \addplot[red!70!black, solid]
    {1.3/(sqrt(2*3.14159*1.3))*exp(-(x-2.6)^2/(2*1.3))};
  \node[panel label] at (axis cs:-0.8,0.72)
    {(f) Market impact};
  \node[font=\scriptsize, black!50, align=center] at (axis cs:1.0,0.55)
    {extra shift from\\price feedback};
\end{axis}

\end{tikzpicture}
\caption{How each distortion channel deforms the wealth distribution
  relative to the neutral case.  Blue: untaxed propagator.  Red: taxed
  propagator.  (a)~Neutral tax: pure translation, no deformation.
  (b)~Book-value assessment: different assets shift by different
  amounts, splitting the distribution into modes.  (c)~Liquidity
  frictions: the distribution broadens (increased effective $D$) as
  forced selling at unfavourable prices adds noise.  (d)~Dividend
  extraction: the drift becomes time-dependent as firm capital erodes,
  introducing asymmetry.  (e)~Migration: high-wealth investors exit the
  system above a threshold~$x^*$, truncating the right tail.
  (f)~Market impact: the collective selling pressure shifts the
  distribution further left than the tax alone would predict, with
  additional compression from price feedback.}
\label{fig:channels}
\end{figure}
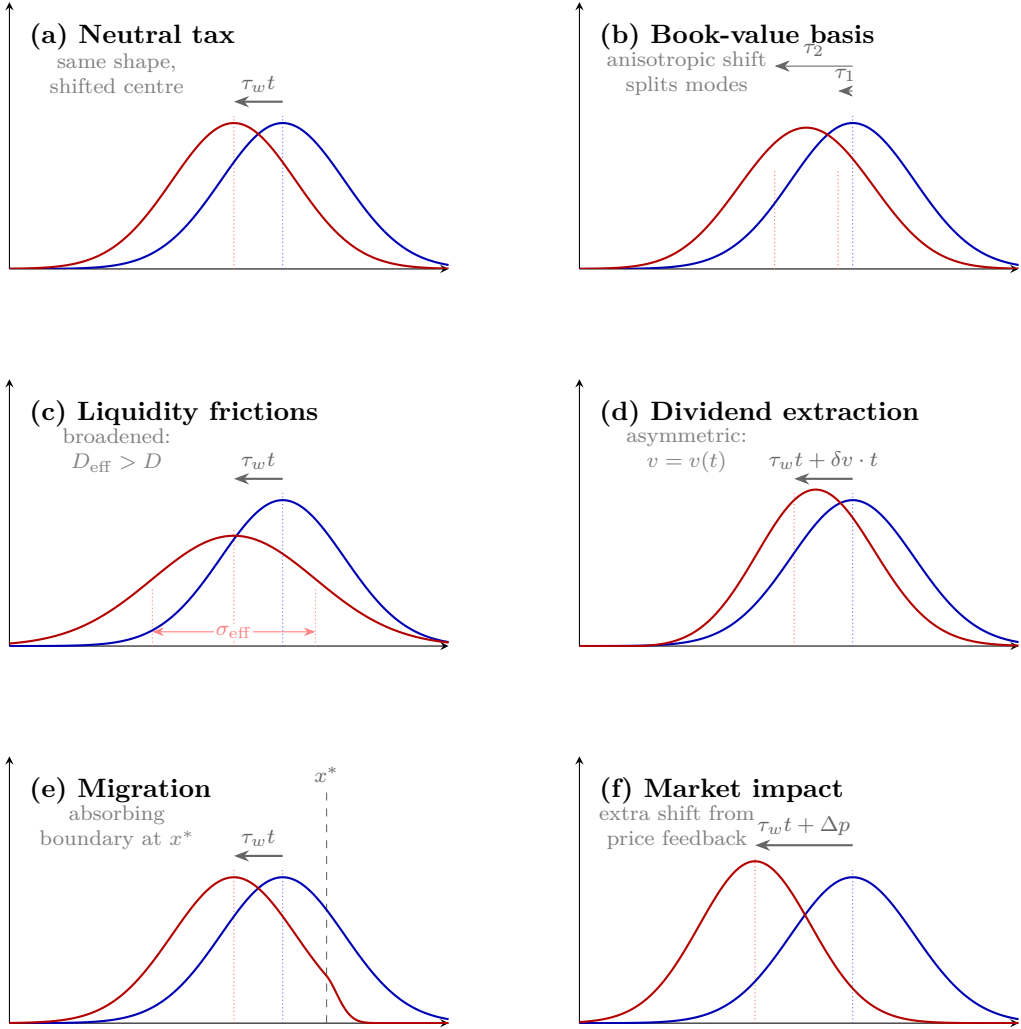

\section{Toward steady-state distributions}\label{sec:steady}

\subsection{The problem with pure GBM}

The Fokker--Planck equation \eqref{eq:fp_x} with constant $v$ and $D$
has no stationary solution on $(-\infty, \infty)$: the distribution
drifts and spreads indefinitely.  This reflects the well-known fact that
geometric Brownian motion produces a log-normal distribution whose mean
and variance both grow without bound.

For a wealth distribution to reach a steady state, the system must have
sources and sinks that balance drift and diffusion.

\subsection{Income and consumption as source and sink}

In a realistic model, investors receive income and consume.  The simplest
extension adds an income flux $\lambda$ (in units of log-wealth per
time) and a consumption rate $c$, yielding
\begin{equation}\label{eq:income}
  \dd x = \left(v + \frac{\lambda}{W} - c\right) \dd t
    + \sigma \, \dd B_t \,,
\end{equation}
where the $\lambda/W$ term reflects that a fixed income~$\lambda$ has a
larger effect on log-wealth when $W$ is small.  This creates a
state-dependent drift that pushes low-wealth investors upward (income
dominates) and allows high-wealth investors to drift further right (returns
dominate).  The resulting Fokker--Planck equation has a confining effect
at low wealth and can admit a stationary solution.

\subsection{Structure of the stationary distribution}

Setting $\partial\pi/\partial t = 0$ in the modified Fokker--Planck
equation with income and consumption yields a second-order ODE for
$\pi_{\mathrm{ss}}(x)$.  While the general solution depends on the
specific functional forms, the qualitative structure is well established
in the econophysics literature \citep{Yakovenko2009}:
\begin{itemize}
  \item \textbf{Bulk} (low to moderate wealth): Approximately
    exponential (Boltzmann--Gibbs) distribution, $\pi \propto
    e^{-x/T}$, where $T$ is an effective temperature related to the
    average wealth.
  \item \textbf{Tail} (high wealth): Power-law (Pareto) distribution,
    $p(W) \propto W^{-\alpha}$, arising from the multiplicative nature
    of returns at high wealth where income is negligible relative to
    capital gains.
\end{itemize}

The wealth tax modifies the Pareto exponent.  When GBM is supplemented
by an additive income or redistribution term---turning the dynamics into
a Kesten process---the stationary distribution develops a Pareto tail
\citep{Kesten1973,BouchaudMezard2000}; see \citet{Gabaix2009} for a
survey of the random-growth mechanism and its applications.  The tail exponent satisfies
(see \Cref{app:pareto} for the derivation)
\begin{equation}\label{eq:pareto}
  \alpha(\tw) = 1 - \frac{2(\mu - \tw)}{\sigma^2} \,.
\end{equation}
The condition $\alpha > 0$ requires $\mu - \tw < \sigma^2/2$, i.e.\
the log-wealth drift must be negative for a stationary distribution to
exist.  Increasing the tax rate increases~$\alpha$, making the tail
steeper (less inequality): the tax reduces the effective drift that
sustains the power law, compressing the upper tail of the wealth
distribution.

\begin{remark}[Neutrality and the distribution]
There is no contradiction between the neutrality result (individual
portfolio choice unchanged) and the distributional effect (Pareto
exponent changes).  Neutrality is a statement about each investor's
optimisation problem; the distributional effect is a statement about the
ensemble.  The tax is neutral in the sense that no investor wants to
change their portfolio, but the aggregate distribution shifts because
all investors are uniformly poorer.
\end{remark}

\begin{remark}[Distinction from Bouchaud--M\'{e}zard redistribution]
The proportional wealth tax studied here is structurally different from
the redistribution mechanism in \citet{BouchaudMezard2000}.  Their
model features a mean-field exchange term
$J \sum_j (w_j - w_i)$ that transfers wealth between agents---an
off-diagonal coupling in the multi-agent Langevin system.  The
proportional wealth tax, by contrast, modifies each investor's own
drift uniformly: $v \to v - \tw$, a diagonal perturbation that
leaves the diffusion and all inter-investor couplings unchanged.
These are distinct modifications of the Fokker--Planck equation.
The exchange mechanism confines the distribution through
inter-agent interactions; the wealth tax confines it through a
shift in each investor's individual growth rate.  The drift-shift
symmetry that underlies our neutrality result has no analogue in
the exchange framework.
When agents additionally have heterogeneous growth rates,
\citet{BernardBouchaudLeDoussal2026} show that the Bouchaud--Mézard
model exhibits a localisation--delocalisation phase transition at a
critical redistribution rate, with an intermediate partially localised
phase governed by a Random Energy Model analogy.
\end{remark}

\subsection{Ergodicity and the Pareto condition}\label{sec:ergodicity}

The Pareto exponent \eqref{eq:pareto} admits a revealing reinterpretation
through the lens of ergodicity \citep{Peters2019}.  Under geometric
Brownian motion, the ensemble-average growth rate of wealth
$\E[\dd W/W]/\dd t = \mu$ differs from the time-average (almost-sure)
growth rate of log-wealth,
\begin{equation}\label{eq:time_avg}
  g_{\mathrm{time}} = \mu - \frac{\sigma^2}{2} \,,
\end{equation}
whenever $\sigma > 0$.  The gap between them is $\sigma^2/2 = D$,
exactly the diffusion coefficient in the Fokker--Planck equation.  This
is the signature of non-ergodicity for multiplicative processes: the
typical trajectory and the ensemble average diverge, and the divergence
is governed by the same parameter that controls the spreading of the
wealth distribution
\citep[see][for ergodicity in Brownian systems]{Zwanzig2001}\citep[and][for
quasi-non-ergodicity in wealth dynamics]{BouchaudFarmer2021}.

The wealth tax reduces both averages uniformly: the ensemble-average
growth rate becomes $\mu - \tw$ and the time-average becomes
\begin{equation}\label{eq:time_avg_taxed}
  v_\tau \;=\; \mu - \tw - \frac{\sigma^2}{2} \,,
\end{equation}
which is the drift of taxed log-wealth that appears throughout the
preceding sections.  The Pareto exponent can now be written directly
in terms of this time-average growth rate:
\begin{equation}\label{eq:pareto_ergodic}
  \alpha(\tw)
    = -\,\frac{2\,v_\tau}{\sigma^2}
    = -\,\frac{v_\tau}{D} \,.
\end{equation}
This is a drift-to-diffusion ratio---the analogue of a P\'eclet number
in transport physics---measuring the strength of directed motion
relative to diffusive spreading.  A physicist will recognise it
immediately: the Pareto tail steepness is set by the competition between
drift (which concentrates wealth) and diffusion (which disperses it).

Three consequences follow.

First, \emph{the condition for a stationary distribution
($\alpha > 0$) is equivalent to} $v_\tau < 0$: the time-average
growth rate of taxed log-wealth must be negative.  This is not a
pathology.  It means the typical investor's wealth is shrinking over
time, even as the ensemble mean grows (because the ensemble average
is dominated by a few lucky trajectories in the upper tail).  The
tension between multiplicative growth for the fortunate few and
mean-reversion via the additive Kesten term is precisely what
creates and sustains the Pareto tail.

Second, \emph{the wealth tax steepens the Pareto tail by making
$v_\tau$ more negative.}  This is the distributional counterpart of
the neutrality result: the tax does not change any investor's
portfolio (neutrality), but by shifting $v_\tau$ downward it tips
the drift--diffusion balance further toward diffusion dominance,
compressing the upper tail of the wealth distribution.

Third, \emph{the non-ergodicity gap $\sigma^2/2 = D$ is invariant
under the wealth tax.}  Since the tax modifies only the drift, not
the diffusion coefficient, it cannot alter the fundamental
non-ergodic character of the dynamics.  The gap between what the
ensemble predicts and what the typical investor experiences remains
unchanged.  This invariance will reappear in the analysis of
relaxation times, where $D$ sets the timescale over which the
distribution approaches its new steady state.

Finally, the neutrality result acquires a clean ergodic
restatement.  Consider a portfolio of $N$ assets with
time-average growth rates $g_i = \mu_i - \sigma_i^2/2$.  The wealth
tax transforms each rate to $g_i - \tw$, preserving the ranking:
if $g_i > g_j$ before tax, then $g_i - \tw > g_j - \tw$ after.
Since long-run portfolio choice depends on the ranking of
time-average growth rates (the asset that compounds fastest almost
surely dominates), the tax is neutral for portfolio selection.

\subsection{Relaxation dynamics and the spectral gap}\label{sec:relaxation}

The ergodic reinterpretation tells us \emph{where} the wealth
distribution converges to---a Pareto tail with exponent~$\alpha$.  It
does not tell us \emph{how fast}.  After a tax change shifts the drift,
the distribution relaxes toward a new steady state.  The speed of this
relaxation is governed by the \emph{spectral gap} of the
Fokker--Planck operator: the magnitude of the second eigenvalue of the
Kolmogorov forward operator $\mathcal{A}^*$
\citep[for the general eigenvalue theory, see][Ch.~6]{LiviPoliti2017}.

\citet{GabaixEtAl2016} prove that for the random growth process
$\dd x = \mu\, \dd t + \sigma\, \dd B_t$ with reflecting barrier at
$x = 0$ and demographic turnover at rate~$\delta$ (investors die and are
replaced at a reference level), the cross-sectional distribution
$p(x,t)$ converges exponentially to its stationary
distribution~$p_\infty(x)$:
\[
  \|p(\cdot, t) - p_\infty(\cdot)\| \;\sim\; k\, e^{-\lambda t} \,,
\]
where $\|\cdot\|$ is the $L^1$ (total variation) norm and the rate
of convergence is
\begin{equation}\label{eq:spectral_gap}
  \lambda \;=\; \frac{v_\tau^2}{2\sigma^2}
    \,\mathbf{1}_{\{v_\tau < 0\}} \;+\; \delta \,.
\end{equation}
The indicator function $\mathbf{1}_{\{v_\tau < 0\}}$ is the key: the
drift-dependent term contributes to convergence \emph{only} when the
time-average growth rate is negative ($v_\tau < 0$), so that the drift
pushes probability back toward the barrier.  When $v_\tau > 0$,
the drift carries wealth away from the barrier, and only demographic
turnover~$\delta$ drives convergence.  A physicist will recognise
two distinct relaxation mechanisms: confining drift (analogous to a
restoring force) and population renewal (analogous to coupling to a
thermal reservoir).

The corresponding half-life is
\begin{equation}\label{eq:half_life}
  t_{1/2} \;=\; \frac{\ln 2}{\lambda} \,.
\end{equation}

Three features of \eqref{eq:spectral_gap} deserve emphasis.

First, \emph{the Pareto exponent and the relaxation rate are linked.}
When $v_\tau < 0$, the Pareto exponent from
\Cref{sec:ergodicity} is $\alpha = -2v_\tau/\sigma^2$, so the
drift-dependent contribution to~$\lambda$ can be written
\begin{equation}\label{eq:lambda_alpha}
  \lambda \;=\; \frac{\alpha^2 \sigma^2}{8} + \delta \,.
\end{equation}
The two key observables of the wealth distribution---the tail steepness
($\alpha$) and the convergence rate ($\lambda$)---are related through a
single formula.  A steeper tail (larger~$\alpha$) implies faster
convergence, because a stronger confining drift (more
negative~$v_\tau$) both steepens the Pareto tail and accelerates the
return to equilibrium.

Second, \emph{there is a critical threshold at $v_\tau = 0$}
(equivalently, $\tw = \mu - \sigma^2/2$, or $\alpha = 0$).  Below
this threshold, the drift-dependent mechanism is inactive: convergence
relies entirely on demographic turnover at rate~$\delta$.  Above it,
both mechanisms operate.  This is the relaxation-dynamics counterpart of
the steady-state result that the Pareto tail exists only for
$\alpha > 0$.

Third, \emph{for realistic parameters, relaxation is slow.}  With
$\sigma = 0.30$ (cross-sectional wealth volatility), $\mu = 0.08$
(expected return), $\delta = 1/30$ (generational turnover), and a
tax rate $\tw = 2\%$: the taxed drift is $v_\tau = 0.08 - 0.02 - 0.045
= 0.015 > 0$.  Since $v_\tau$ remains positive, the drift-dependent
term is inactive, giving $\lambda = \delta = 0.033$ and
$t_{1/2} \approx 21$~years.  Even at $\tw = 5\%$, $v_\tau = -0.015$
and the drift contribution is only $0.015^2/(2 \times 0.09) \approx
0.001$, barely altering the half-life.  Only at much higher tax rates
(or lower expected returns) does the drift mechanism materially
accelerate convergence.

This is the wealth-tax application of \citeauthor{GabaixEtAl2016}'s
central finding: the baseline random growth model generates inherently
slow transition dynamics.  The wealth tax changes \emph{where} the
distribution converges to---the steady-state Pareto exponent~$\alpha$
responds immediately to the new drift---but it has much less effect on
\emph{how fast} it gets there.  Short-run distributional effects of a
tax change may therefore look very different from the long-run
steady state, with the transition stretching over decades.
\Cref{fig:relaxation} illustrates this for three tax rates.

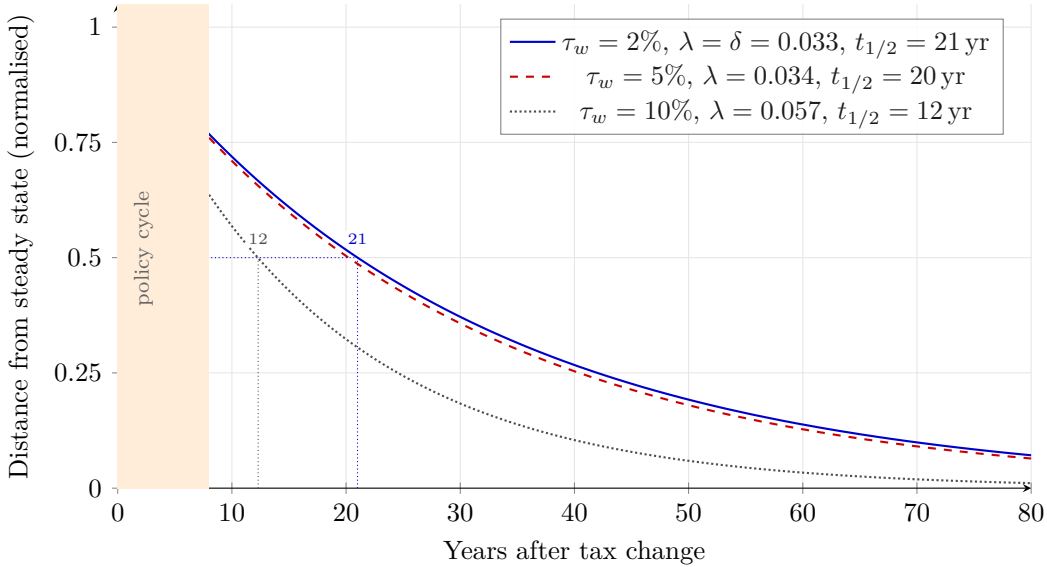
\begin{figure}[H]
\centering
\begin{tikzpicture}
\begin{axis}[
  width=0.85\textwidth,
  height=0.50\textwidth,
  xlabel={Years after tax change},
  ylabel={Distance from steady state (normalised)},
  domain=0:80,
  samples=200,
  ymin=0,
  ymax=1.05,
  xmin=0,
  xmax=80,
  legend style={
    at={(0.97,0.97)},
    anchor=north east,
    font=\small,
    draw=black!50,
    fill=white,
    fill opacity=0.9,
  },
  every axis plot/.append style={thick},
  axis lines=left,
  clip=true,
  tick label style={font=\small},
  label style={font=\small},
  ytick={0, 0.25, 0.5, 0.75, 1.0},
  xtick={0, 10, 20, 30, 40, 50, 60, 70, 80},
  grid=major,
  grid style={gray!20},
]

\addplot[blue!80!black, solid]
  {exp(-0.033*x)};
\addlegendentry{$\tau_w = 2\%$, $\lambda = \delta = 0.033$, $t_{1/2} = 21$\,yr}

\addplot[red!80!black, dashed]
  {exp(-0.0343*x)};
\addlegendentry{$\tau_w = 5\%$, $\lambda = 0.034$, $t_{1/2} = 20$\,yr}

\addplot[black!70, densely dotted]
  {exp(-0.0565*x)};
\addlegendentry{$\tau_w = 10\%$, $\lambda = 0.057$, $t_{1/2} = 12$\,yr}

\draw[blue!80!black, thin, densely dotted] (axis cs:21,0) -- (axis cs:21,0.5);
\draw[blue!80!black, thin, densely dotted] (axis cs:0,0.5) -- (axis cs:21,0.5);
\node[font=\tiny, blue!80!black, fill=white, inner sep=1pt]
  at (axis cs:21,0.54) {$21$};

\draw[black!70, thin, densely dotted] (axis cs:12.3,0) -- (axis cs:12.3,0.5);
\node[font=\tiny, black!70, fill=white, inner sep=1pt]
  at (axis cs:12.3,0.54) {$12$};

\fill[orange!15] (axis cs:0,0) rectangle (axis cs:8,1.05);
\node[font=\scriptsize, black!60, rotate=90, anchor=south]
  at (axis cs:4,0.52) {policy cycle};

\end{axis}
\end{tikzpicture}
\caption{Relaxation toward the new steady-state wealth distribution
  after a tax change, for three tax rates.  Parameters:
  $\sigma = 0.30$, $\mu = 0.08$, $\delta = 1/30$ (generational
  turnover).  The vertical axis measures the distance
  $\|p(\cdot,t) - p_\infty\|$ normalised to unity at $t = 0$.
  At $\tau_w = 2\%$ the time-average growth rate remains positive
  ($v_\tau > 0$), so convergence depends entirely on demographic
  turnover ($\lambda = \delta$).  At $\tau_w = 5\%$, $v_\tau$ turns
  negative but only marginally, adding less than $4\%$ to the
  convergence rate.  Even at the extreme rate $\tau_w = 10\%$, the
  half-life is still over a decade.  The shaded region marks a
  typical electoral cycle ($\sim$4--8 years), during which the
  distribution has moved less than a quarter of the way to its new
  steady state in all three scenarios.}
\label{fig:relaxation}
\end{figure}

\section{Discussion and extensions}\label{sec:discussion}

\subsection{What the mapping enables}

The preceding sections established six results: neutrality is a
drift-shift symmetry (\Cref{sec:neutrality}) that is robust to
non-Gaussian returns and stochastic volatility
(\Cref{sec:robustness}), each distortion channel
breaks it in a classifiable way (\Cref{sec:channels}), the
Fokker--Planck equation governs the distributional consequences
(\Cref{sec:steady}), the Pareto tail steepness is a
drift-to-diffusion ratio whose sign is controlled by the time-average
growth rate (\Cref{sec:ergodicity}), and the spectral gap of the
Fokker--Planck operator governs how fast the distribution converges
to steady state (\Cref{sec:relaxation}).  Together, these open concrete
avenues that the standard finance analysis does not provide:
\begin{enumerate}
  \item \textbf{Empirical tests of symmetry breaking.}  Each distortion
    channel predicts a specific deformation of the propagator
    (\Cref{fig:channels}).  In principle, panel data on wealth
    trajectories before and after a tax change could be tested against
    these signatures---distinguishing, for example, broadening (liquidity)
    from truncation (migration).
  \item \textbf{Quantitative bounds on distortion magnitude.}  Because
    each channel modifies a specific coefficient in the Fokker--Planck
    equation, its effect can be bounded by estimating the relevant
    parameter (book-to-market dispersion, bid--ask spreads, migration
    elasticities).  The taxonomy converts a qualitative policy debate
    into a parameter-estimation problem.
  \item \textbf{Distributional dynamics.}  The Fokker--Planck equation
    describes not only the steady-state distribution but also the
    transient path toward it.  The spectral gap analysis of
    \Cref{sec:relaxation} shows that this path can be slow---decades
    for realistic parameters---implying that short-run distributional
    effects of a tax change may differ substantially from the long-run
    steady state.  These questions are inaccessible from within
    representative-agent asset pricing.
  \item \textbf{Policy timescales.}  Two results of the preceding
    analysis combine into a concrete policy message.  First, the
    ergodicity analysis (\Cref{sec:ergodicity}) shows that a
    stationary wealth distribution \emph{requires} the typical
    investor's log-wealth to shrink over time ($v_\tau < 0$).  In
    real economies, median wealth growth is indeed lower than mean
    wealth growth---this is not a pathology of the model but the
    standard mechanism by which multiplicative noise generates heavy
    tails.  The wealth tax steepens the Pareto tail by pushing the
    time-average growth rate further into negative territory, but the
    effect on the \emph{typical} investor is a uniform reduction in
    growth, not a reallocation across assets.  Second, the spectral
    gap (\Cref{sec:relaxation}) quantifies \emph{how long} these
    distributional shifts take.  For realistic calibrations
    (\Cref{fig:relaxation}), the half-life of the transition exceeds
    twenty years---far longer than a typical electoral or policy cycle.
    A government introducing a wealth tax to reduce inequality should
    therefore expect a slow, multi-generational adjustment.  Short-run
    revenue effects will materialise immediately (they depend only on
    the current distribution and the tax rate), but the distributional
    steady state lies decades away.  This mismatch between fiscal
    timescales and distributional timescales is invisible in static
    or representative-agent analyses; it emerges naturally from the
    Fokker--Planck framework.
\end{enumerate}

\subsection{Open questions}

Several extensions merit investigation.

\textbf{Detailed balance, market efficiency, and nonlinear
Fokker--Planck dynamics.}  The drift-shift symmetry of
\Cref{sec:neutrality} has a natural equilibrium interpretation.
Under the efficient market hypothesis, risk-adjusted returns are
unpredictable: the drift under the risk-neutral measure equals the
risk-free rate.  In Fokker--Planck language, this is a detailed-balance
condition---the probability current $J$ vanishes at steady state
\citep[cf.][Ch.~6]{LiviPoliti2017}.  The proportional wealth tax
preserves detailed balance because it shifts the drift uniformly,
which is another way to see why prices are unaffected.  The inelastic
markets hypothesis of \citet{GabaixKoijen2022} breaks detailed balance
by introducing flow-dependent pricing---a mechanism given a
microstructural interpretation by \citet{Bouchaud2021}---so that the
drift becomes a functional of the distribution itself.  Under inelastic markets, the wealth tax
creates a permanent probability current that has no counterpart in the
efficient-markets case, and this current drives a persistent price
impact.  Modelling this requires a nonlinear Fokker--Planck equation
of the form in \eqref{eq:ch5_fp}, where the drift is
self-consistently determined by the distribution.  This moves the
analysis from the linear, exactly solvable regime of the present paper
into the territory of nonlinear diffusion and mean-field models---a
substantial extension that we defer to future work.

\textbf{Heterogeneous growth regimes.}
The present paper assumes a common drift~$\mu$ and
volatility~$\sigma$ for all investors.  In practice, different
wealth levels may face systematically different growth rates---for
instance, if wealthy investors have access to higher-return asset
classes.  \citet{GabaixEtAl2016} show that introducing
wealth-dependent growth rates can dramatically accelerate the
convergence to the stationary distribution, resolving the ``slow
transition'' puzzle that arises when the spectral gap is dominated
by the demographic term~$\delta$ (\Cref{sec:relaxation}).
In the Fokker--Planck framework, this corresponds to a
state-dependent drift $v(x)$, which breaks the constant-coefficient
structure that underlies the exact results of this paper.  Whether the
neutrality result survives approximately in such a setting---and if
so, under what conditions---is an open question with direct empirical
relevance.

\textbf{Debt, leverage, and the net wealth tax base.}
The leverage amplification identified in \Cref{sec:ch1} implies that
the Fokker--Planck dynamics of \emph{taxable} net wealth can differ
substantially from those of market net wealth.  When assets are
assessed below market value but debt is fully deductible, the
effective drift and diffusion of the taxable distribution depend on
the leverage ratio, which is itself an endogenous choice variable.
A full treatment would model the joint dynamics of assets, debt,
and taxable net wealth as a multi-dimensional
Fokker--Planck system, with the leverage decision coupling the
dimensions.  This is particularly relevant for real estate, where
assessment discounts are large and mortgage financing is prevalent.
We defer the formal multi-dimensional treatment to a separate
paper on redistribution analysis \citep{Froeseth2026R}, which
provides a framework for incorporating it through the taxonomy of
Fokker--Planck modifications.

\textbf{Phase transitions.}
In principle, a sufficiently large change in the drift coefficient
could trigger a qualitative change in the distribution---for example,
a transition from a power-law tail to an exponential tail, or the
emergence of bimodality.  Whether such transitions occur at realistic
tax rates is an open empirical and theoretical question.

\appendix
\section{Derivation of the Pareto exponent}\label{app:pareto}

The formula for $\alpha(\tw)$ in \eqref{eq:pareto} does not follow from
pure GBM, which has no stationary distribution
(\Cref{sec:steady}).  It arises when multiplicative growth is
supplemented by an additive component---income, redistribution, or a
reflecting boundary---so that the process takes the Kesten form
\begin{equation}\label{eq:kesten}
  W_{n+1} = A_n\, W_n + B_n \,,
\end{equation}
where $A_n$ (the multiplicative factor from returns and tax) and $B_n$
(the additive income/redistribution term) are i.i.d.\ and independent of
each other.  The classical result of \citet{Kesten1973} states that if
$\E[\ln A] < 0$ (so that wealth does not diverge) and $B$ has
suitable integrability, the stationary distribution has a Pareto tail:
$\Pr(W > w) \sim w^{-\alpha}$, where $\alpha > 0$ is the unique
solution of
\begin{equation}\label{eq:kesten_condition}
  \E\bigl[A^\alpha\bigr] = 1 \,.
\end{equation}

In continuous time, the taxed GBM over an interval~$\dd t$ gives the
multiplicative factor
\[
  A = \exp\!\Bigl[\bigl(\mu - \tw - \tfrac{\sigma^2}{2}\bigr)\dd t
    + \sigma \sqrt{\dd t}\; Z\Bigr], \qquad Z \sim \mathcal{N}(0,1)\,.
\]
Evaluating the moment-generating function:
\[
  \E[A^\alpha]
  = \exp\!\Bigl[\alpha\bigl(\mu - \tw - \tfrac{\sigma^2}{2}\bigr)\dd t
    + \tfrac{\alpha^2 \sigma^2}{2}\dd t\Bigr] .
\]
Setting $\E[A^\alpha] = 1$ requires the exponent to vanish:
\[
  \alpha\bigl(\mu - \tw - \tfrac{\sigma^2}{2}\bigr)
    + \frac{\alpha^2 \sigma^2}{2} = 0 \,.
\]
Dividing by $\alpha \neq 0$ and solving:
\[
  \alpha = -\frac{2(\mu - \tw - \sigma^2/2)}{\sigma^2}
  = \frac{\sigma^2 - 2(\mu - \tw)}{\sigma^2}
  = 1 - \frac{2(\mu - \tw)}{\sigma^2} \,,
\]
which is \eqref{eq:pareto}.  The condition $\alpha > 0$ requires
$\mu - \tw < \sigma^2/2$, i.e.\ the log-wealth drift
$v_\tau = \mu - \tw - \sigma^2/2$ must be negative---the
multiplicative dynamics must be mean-reverting on average for a
stationary distribution to exist.

\section{Portfolio neutrality under stochastic volatility}
\label{app:stochvol}

This appendix provides the full derivation of
\Cref{prop:sv_neutral}, establishing that the optimal portfolio
weight under the Heston stochastic volatility model is independent
of the wealth tax rate.

\subsection*{Setup}

The investor maximises expected discounted CRRA utility
$\E\!\left[\int_0^T e^{-\delta t}\,
C_t^{1-\gamma}/(1-\gamma)\,\dd t\right]$
subject to the wealth dynamics~\eqref{eq:heston_dW} and the
variance dynamics~\eqref{eq:heston_v}.  The value function is
\[
  J(W, v, t) = \max_{C, w}\;
  \E_t\!\left[\int_t^T e^{-\delta(s-t)}
    \frac{C_s^{1-\gamma}}{1-\gamma}\,\dd s\right] .
\]

\subsection*{HJB equation}

By Bellman's principle, $J$ satisfies
\begin{equation}\label{eq:hjb_full}
\begin{split}
  0 = \max_{C, w} \biggl\{
    &\frac{C^{1-\gamma}}{1-\gamma} - \delta J + J_t \\
    &+ J_W\bigl[W(r_f + w(\mu - r_f) - \tw) - C\bigr] \\
    &+ J_v\,\lambda(\theta - v) \\
    &+ \tfrac{1}{2}J_{WW}\,w^2 v W^2
     + \tfrac{1}{2}J_{vv}\,\kappa^2 v \\
    &+ J_{Wv}\,wW\kappa v\rho
  \biggr\} .
\end{split}
\end{equation}

\subsection*{Separable value function}

The homogeneity of CRRA utility suggests the ansatz
\begin{equation}\label{eq:J_sep}
  J(W, v, t) = \frac{W^{1-\gamma}}{1-\gamma}\,f(v, t) \,,
\end{equation}
with $f > 0$.  The relevant partial derivatives are
\begin{alignat}{2}
  J_W &= W^{-\gamma} f \,, &\qquad
  J_{WW} &= -\gamma\, W^{-\gamma - 1} f \,, \notag\\
  J_v &= \frac{W^{1-\gamma}}{1-\gamma}\,f_v \,, &\qquad
  J_{vv} &= \frac{W^{1-\gamma}}{1-\gamma}\,f_{vv} \,, \notag\\
  J_{Wv} &= W^{-\gamma} f_v \,. && \label{eq:J_partials}
\end{alignat}

\subsection*{First-order condition for $w$}

Differentiating~\eqref{eq:hjb_full} with respect to~$w$ and
setting the result to zero:
\[
  J_W W(\mu - r_f) + J_{WW}\,w v W^2 + J_{Wv}\,W\kappa v\rho = 0 \,.
\]
Substituting~\eqref{eq:J_partials} and dividing by $W^{1-\gamma}f$:
\[
  (\mu - r_f) - \gamma\, w\, v + \frac{f_v}{f}\,\kappa v\rho = 0 \,.
\]
Solving for $w^*$:
\begin{equation}\label{eq:wstar_full}
  w^* = \frac{\mu - r_f}{\gamma v}
    + \frac{f_v}{f}\cdot\frac{\kappa\rho}{\gamma} \,.
\end{equation}
The tax rate~$\tw$ does not appear in~\eqref{eq:wstar_full}.

\subsection*{PDE for $f(v,t)$ and the Riccati argument}

Substituting the separable form~\eqref{eq:J_sep} and the optimal
controls into the HJB equation, and dividing by
$W^{1-\gamma}/(1-\gamma)$, yields a PDE for~$f$:
\begin{equation}\label{eq:f_pde_full}
\begin{split}
  0 &= f_t + \lambda(\theta - v)\,f_v
    + \tfrac{1}{2}\kappa^2 v\,f_{vv} \\
    &\quad + h(v;\,\gamma,\mu,r_f,\kappa,\rho)\,f \\
    &\quad + (1-\gamma)(r_f - \tw)\,f + g(f) \,,
\end{split}
\end{equation}
where $h(v;\,\cdot\,)$ collects the $v$-dependent terms arising from
the optimal portfolio (a quadratic in $f_v/f$) and $g(f)$ collects
the contribution from optimal consumption.  The key observation:
$\tw$ appears \emph{only} in the term $(1-\gamma)(r_f - \tw)f$,
which is independent of~$v$.

Using the exponential-affine ansatz
$f(v,t) = \exp(A(t) + B(t)\,v)$ \citep{ChackoViceira2005},
substitution into~\eqref{eq:f_pde_full} separates into:
\begin{itemize}
  \item A \textbf{Riccati equation for $B(t)$}, arising from the
    terms proportional to~$v$:
    \[
      \dot{B} = -\lambda B + \tfrac{1}{2}\kappa^2 B^2
        + h_1(\gamma, \mu, r_f, \kappa, \rho) \,,
    \]
    where $h_1$ absorbs the $v$-coefficient from~$h$.  This equation
    involves only the parameters of the variance dynamics and the
    risk-return trade-off---\emph{not}~$\tw$.
  \item An \textbf{ODE for $A(t)$}, arising from the constant terms:
    \[
      \dot{A} = \lambda\theta\,B
        + (1-\gamma)(r_f - \tw) + h_0 \,,
    \]
    which absorbs~$\tw$ alongside other constants.
\end{itemize}
Since $f_v/f = B(t)$ and $B(t)$ is independent of~$\tw$, the hedging
demand $\frac{f_v}{f}\cdot\frac{\kappa\rho}{\gamma} = B(t) \cdot
\frac{\kappa\rho}{\gamma}$ is tax-invariant.  Combined with the
tax-invariance of the myopic demand, this establishes that $w^*$ is
independent of~$\tw$.  \qed

\section{Geometric interpretation of neutrality}\label{app:geometry}

The drift-shift symmetry and the CRRA separability result admit a
geometric formulation that unifies the algebraic arguments of
\Cref{sec:robustness} and connects them to structures familiar from
theoretical physics.  The formulation uses the language of fiber
bundles and connections \citep[see][Ch.~9--10, for a physicist's
introduction]{Nakahara2003}; readers interested in the application
of gauge-theoretic ideas to finance may consult
\citet{Ilinski2001}.

\subsection*{The state space as a fiber bundle}

Consider the investor's full state as a point in a product space: the
\emph{base} is the log-wealth coordinate $x \in \mathbb{R}$ (or, under
stochastic volatility, the pair $(x, v) \in \mathbb{R} \times
\mathbb{R}_+$), and the \emph{fiber} above each base point is the
portfolio simplex $\Delta^{N-1} = \{\mathbf{w} \in \mathbb{R}^N :
\sum_i w_i = 1\}$.  The investor's optimisation problem selects a
\emph{section} of this bundle: a rule $\mathbf{w}^*(x, \mathbf{X}, t)$
that assigns an optimal portfolio to each state.

Under CRRA utility, the value function separates as
$J = W^{1-\gamma} f(\mathbf{X}, t)/(1-\gamma)$, and the first-order
conditions yield a portfolio $\mathbf{w}^*$ that is independent of the
base coordinate~$x$ (equivalently, of~$W$).  In differential geometry
language, the optimal section is \emph{horizontal}: it does not vary
along the base.  The connection on the bundle is \emph{flat}---there is
no curvature, no holonomy, and no wealth-dependence in the portfolio
prescription.

The proportional wealth tax acts as a \emph{vertical automorphism}: it
translates every point along the base ($x \to x - \tw t$) without
rotating the fiber.  Because the connection is flat and the tax acts
only vertically, the horizontal section is invariant.  This is the
geometric content of portfolio neutrality.
\Cref{fig:fiber_bundle} illustrates the construction.

\begin{figure}[H]
\centering
\begin{tikzpicture}[
  >=Stealth,
  fiber/.style={draw=black!30, thin},
  section/.style={very thick},
  taxarrow/.style={->, thick, red!70!black},
  scale=0.92, every node/.style={scale=0.92},
]

\node[font=\bfseries\small, anchor=south] at (3.2, 5.8) {(a) Neutrality: flat connection};

\draw[->, thick] (-0.3, 0) -- (6.8, 0)
  node[below right, font=\small] {$x = \ln W$};

\draw[->, thick] (-0.3, 0) -- (-0.3, 5.5)
  node[above right, font=\small] {$w^*$};

\foreach \xpos in {0.5, 1.5, 2.5, 3.5, 4.5, 5.5} {
  \draw[fiber] (\xpos, 0.1) -- (\xpos, 5.2);
}

\draw[section, blue!70!black] (0.3, 3.2) -- (6.2, 3.2);
\node[font=\small, blue!70!black, anchor=west] at (6.3, 3.2) {$\mathbf{w}^*$};

\foreach \xpos in {0.5, 1.5, 2.5, 3.5, 4.5, 5.5} {
  \fill[blue!70!black] (\xpos, 3.2) circle (2pt);
}

\draw[taxarrow] (5.2, 0.7) -- (3.8, 0.7)
  node[midway, above, font=\small] {$\tw t$};
\draw[taxarrow] (5.2, 1.3) -- (3.8, 1.3);

\node[font=\footnotesize, text=black!60, anchor=south west] at (0.6, 4.8)
  {fibers $\Delta^{N-1}$};

\draw[decorate, decoration={brace, amplitude=4pt, mirror}]
  (0.3, -0.4) -- (6.2, -0.4)
  node[midway, below=5pt, font=\footnotesize] {base: log-wealth};

\begin{scope}[xshift=8.2cm]

\node[font=\bfseries\small, anchor=south] at (3.2, 5.8)
  {(b) Distortion: curved connection};

\draw[->, thick] (-0.3, 0) -- (6.8, 0)
  node[below right, font=\small] {$x = \ln W$};

\draw[->, thick] (-0.3, 0) -- (-0.3, 5.5)
  node[above right, font=\small] {$w^*$};

\foreach \xpos in {0.5, 1.5, 2.5, 3.5, 4.5, 5.5} {
  \draw[fiber] (\xpos, 0.1) -- (\xpos, 5.2);
}

\draw[section, red!60!black]
  plot[smooth, domain=0.3:6.2, samples=50]
    (\x, {2.4 + 0.35*sin(60*\x) + 0.12*\x});
\node[font=\small, red!60!black, anchor=west] at (6.3, {2.4 + 0.35*sin(60*6.2) + 0.12*6.2}) {$\mathbf{w}^*(x)$};

\foreach \xpos in {0.5, 1.5, 2.5, 3.5, 4.5, 5.5} {
  \pgfmathsetmacro{\yval}{2.4 + 0.35*sin(60*\xpos) + 0.12*\xpos}
  \fill[red!60!black] (\xpos, \yval) circle (2pt);
}

\draw[section, blue!70!black, dashed, opacity=0.3] (0.3, 3.2) -- (6.2, 3.2);

\draw[<->, thin, black!60]
  (4.5, 3.2) -- (4.5, {2.4 + 0.35*sin(60*4.5) + 0.12*4.5});
\node[font=\footnotesize, text=black!60, anchor=west] at (4.6, 2.9)
  {distortion};

\draw[decorate, decoration={brace, amplitude=4pt, mirror}]
  (0.3, -0.4) -- (6.2, -0.4)
  node[midway, below=5pt, font=\footnotesize] {base: log-wealth};

\end{scope}

\end{tikzpicture}
\caption{Geometric interpretation of neutrality and its breakdown.
  \textbf{(a)}~Under CRRA preferences, the optimal portfolio
  $\mathbf{w}^*$ is a horizontal section of the fiber bundle: it does
  not vary with log-wealth~$x$.  The proportional wealth tax (red
  arrows) translates the base leftward without rotating the fibers,
  leaving the section invariant.
  \textbf{(b)}~When a distortion channel is active (e.g.\ book-value
  assessment, liquidity frictions), the section becomes
  wealth-dependent: the connection acquires curvature, and the
  portfolio varies with~$x$.  The dashed line shows the flat
  (neutral) section for comparison.}
\label{fig:fiber_bundle}
\end{figure}
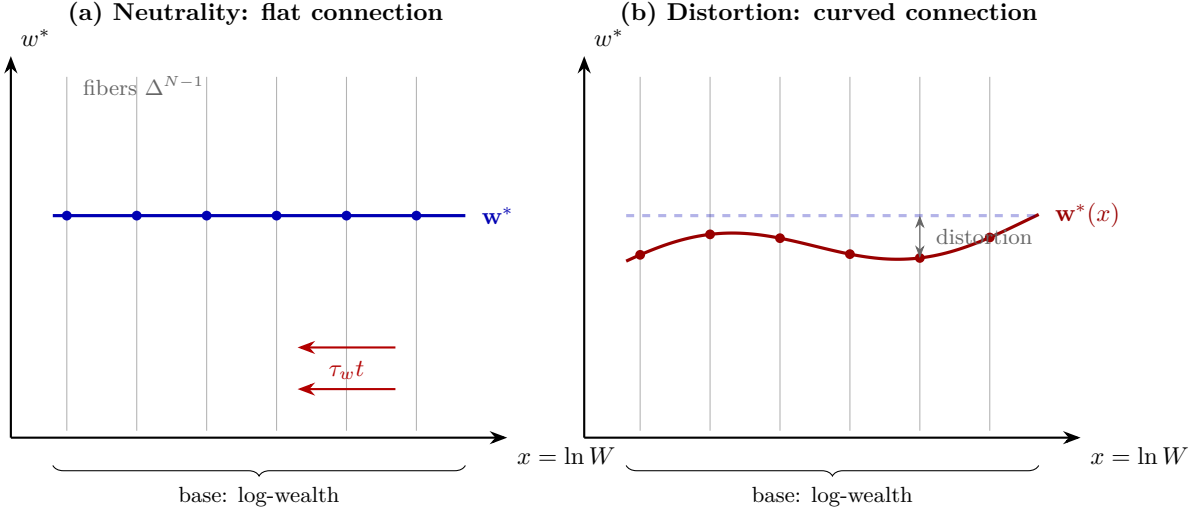

\subsection*{Galilean symmetry in log-wealth space}

The drift-shift transformation
$\mathcal{T}_\tau: v \mapsto v - \tw$, $D \mapsto D$
(\Cref{sec:neutrality}) is the analogue of a \emph{Galilean boost} in
classical mechanics
\citep[for the symmetry group of the diffusion equation,
see][Ch.~6]{Risken1989}.
In the Galilean group, a uniform velocity shift
preserves all relative velocities and all forces (which depend on
position differences).  Here, the tax shifts all drift velocities by
the same constant~$\tw$, preserving all relative drifts, diffusion
coefficients, and Sharpe ratios.

The portfolio weights, being functions of excess returns
$\mu_i - r_f$ and the covariance matrix~$\bm{\Sigma}$, are
\emph{Galilean invariants}: quantities that depend on velocity
differences, not absolute velocities.  The risk-free rate $r_f$
transforms under the same boost ($r_f \to r_f - \tw$), so the excess
return $\mu_i - r_f$ is invariant---exactly as relative velocities are
invariant under a Galilean transformation.

This analogy makes precise the paper's characterisation of the tax as a
``uniform external field'' (\Cref{sec:neutrality}): a spatially
uniform field shifts the equilibrium of every degree of freedom by the
same amount, preserving the relative structure.

\subsection*{Noether's theorem and the conserved charge}

The one-parameter family of transformations
$\{\mathcal{T}_\tau\}_{\tau \geq 0}$ is a continuous symmetry group of
the portfolio optimality conditions.  Noether's theorem associates a
conserved quantity to each continuous symmetry.  The conserved
``charge'' here is the excess return vector
\begin{equation}\label{eq:conserved_charge}
  \mathbf{q} \;\equiv\; \boldsymbol{\mu} - r_f\,\mathbf{1} \,,
\end{equation}
which is invariant under $\mathcal{T}_\tau$ because
$(\boldsymbol{\mu} - \tw\,\mathbf{1}) - (r_f - \tw)\,\mathbf{1}
= \boldsymbol{\mu} - r_f\,\mathbf{1}$.  The optimal portfolio depends
only on this conserved charge and the covariance matrix (which lives
entirely in the diffusion sector and is therefore $\tw$-independent).
Portfolio neutrality is the statement that the optimal portfolio is a
function of conserved quantities only.

\subsection*{Symmetry breaking as curvature}

Each distortion channel of \Cref{sec:channels} has a geometric
reading: it introduces \emph{curvature} into the previously flat
connection, making the optimal portfolio wealth-dependent.
Book-value assessment creates a twist between the base and the
fiber (the effective tax rate depends on unrealised gains, which
depend on~$x$).  Liquidity frictions make the connection
density-dependent.  Migration introduces a boundary that breaks
the translation symmetry of the base space.  In each case, the
horizontal section ceases to exist globally: the portfolio
must now vary with the base coordinate, and neutrality fails.

\subsection*{The two mechanisms revisited}

The fiber bundle picture clarifies why the location-scale and
stochastic volatility generalisations (\Cref{sec:robustness}) rest on
different structures.  The location-scale result is a statement about
the \emph{base}: the translation symmetry $x \to x - \tw t$ holds
regardless of the noise distribution, because it is a property of the
tax mechanism (a deterministic multiplicative drain).  No reference to
the fiber or to preferences is needed.

The stochastic volatility result is a statement about the
\emph{connection}: under CRRA, the connection remains flat even when the
base is enlarged from $x$ to $(x, v)$.  The new coordinate~$v$ adds
dimensionality to the state space and introduces hedging demand into the
portfolio, but the CRRA homogeneity ensures that $\mathbf{w}^*$ remains
independent of~$x$.  The tax, acting only along~$x$, leaves the
(still flat) connection invariant.

\subsection*{Acknowledgements}
The author acknowledges the use of Claude (Anthropic) for assistance with
literature review, \LaTeX{} typesetting, mathematical exposition, and
editorial refinement, and Lemma (Axiomatic AI) for review and proof
checking. All substantive arguments, economic reasoning, and conclusions
are the author's own.

\bibliographystyle{plainnat}

\end{document}